\begin{document}

\journal{Logical and Algebraic Methods in Programming}

\begin{frontmatter}
  
\title{A Linear-\!Time--Branching-\!Time Spectrum for Behavioral
  Specification Theories}

\author[x]{Uli Fahrenberg}

\author[l]{Axel Legay}

\address[x]{{\'E}cole polytechnique, Palaiseau, France}

\address[l]{Universit{\'e} catholique de Louvain, Belgium}

\begin{abstract}
  We propose behavioral specification theories for most equivalences
  in the linear-time--branching-time spectrum.  Almost all previous
  work on specification theories focuses on bisimilarity, but there is
  a clear interest in specification theories for other preorders and
  equivalences.  We show that specification theories for preorders
  cannot exist and develop a general scheme which allows us to define
  behavioral specification theories, based on disjunctive modal
  transition systems, for most equivalences in the
  linear-time--branching-time spectrum.

  \begin{keyword}
    specification theory; linear-time--branching-time spectrum;
    disjunctive modal transition system
  \end{keyword}
\end{abstract}

\end{frontmatter}

\section{Introduction}

Models and specifications are central objects in theoretical computer
science.  In model-based verification, models of computing systems are
held up against specifications of their behaviors, and methods are
developed to check whether or not a given model satisfies a given
specification.

In recent years, behavioral specification theories have seen some
popularity \cite{DBLP:journals/scp/AcetoFFIP13,
  DBLP:conf/fase/BauerDHLLNW12, DBLP:conf/atva/BenesCK11,
  DBLP:conf/avmfss/Larsen89, DBLP:conf/concur/Larsen90,
  DBLP:conf/lics/LarsenX90,
  DBLP:journals/entcs/Raclet08, DBLP:journals/eatcs/AntonikHLNW08,
  DBLP:conf/acsd/BujtorSV15, DBLP:journals/tecs/BujtorV15,
  DBLP:conf/ictac/CaillaudR12}.  Here, the specification formalism is
an extension of the modeling formalism, so that specifications have an
operational interpretation and models are verified by comparing their
operational behavior against the specification's behavior.  Popular
examples of such specification theories are modal transition
systems~\cite{DBLP:journals/eatcs/AntonikHLNW08,
  DBLP:journals/tecs/BujtorV15, DBLP:conf/avmfss/Larsen89},
disjunctive modal transition systems~\cite{DBLP:conf/acsd/BujtorSV15,
  DBLP:conf/atva/BenesCK11,
  DBLP:conf/lics/LarsenX90}, and acceptance
specifications~\cite{DBLP:journals/entcs/Raclet08,
  DBLP:conf/ictac/CaillaudR12}.  Also relations to contracts and
interfaces have been exposed~\cite{DBLP:conf/fase/BauerDHLLNW12,
  DBLP:journals/fuin/RacletBBCLP11}, as have extensions for real-time
and quantitative specifications and for models with
data~\cite{DBLP:journals/mscs/BauerJLLS12,
  DBLP:journals/scp/BertrandLPR12, DBLP:journals/sttt/DavidLLNTW15,
  DBLP:journals/fmsd/BauerFJLLT13, DBLP:journals/acta/FahrenbergL14}.

Except for the work by Vogler~\etal
in~\cite{DBLP:conf/acsd/BujtorSV15, DBLP:journals/tecs/BujtorV15},
behavioral specification theories have been developed only to
characterize bisimilarity.  While bisimilarity is an important
equivalence relation on models, there are many others which also are
of interest.  Examples include nested and $k$-nested
simulation~\cite{DBLP:journals/iandc/GrooteV92,
  DBLP:journals/iandc/AcetoFGI04}, ready or
$\tfrac23$-simulation~\cite{DBLP:conf/popl/LarsenS89}, trace
equivalence~\cite{DBLP:journals/cacm/Hoare78}, impossible
futures~\cite{DBLP:books/sp/Vogler92}, or the failure semantics
of~\cite{DBLP:conf/acsd/BujtorSV15, DBLP:journals/tecs/BujtorV15,
  DBLP:journals/jacm/BrookesHR84, DBLP:journals/acta/Vogler89,
  DBLP:conf/icalp/Pnueli85} and others.

In order to initiate a systematic study of specification theories for
different semantics, we exhibit in this paper specification theories
for most of the equivalences in van~Glabbeek's
linear-time--branching-time spectrum~\cite{inbook/hpa/Glabbeek01}, see
Figure~\ref{fi:spectrum}.

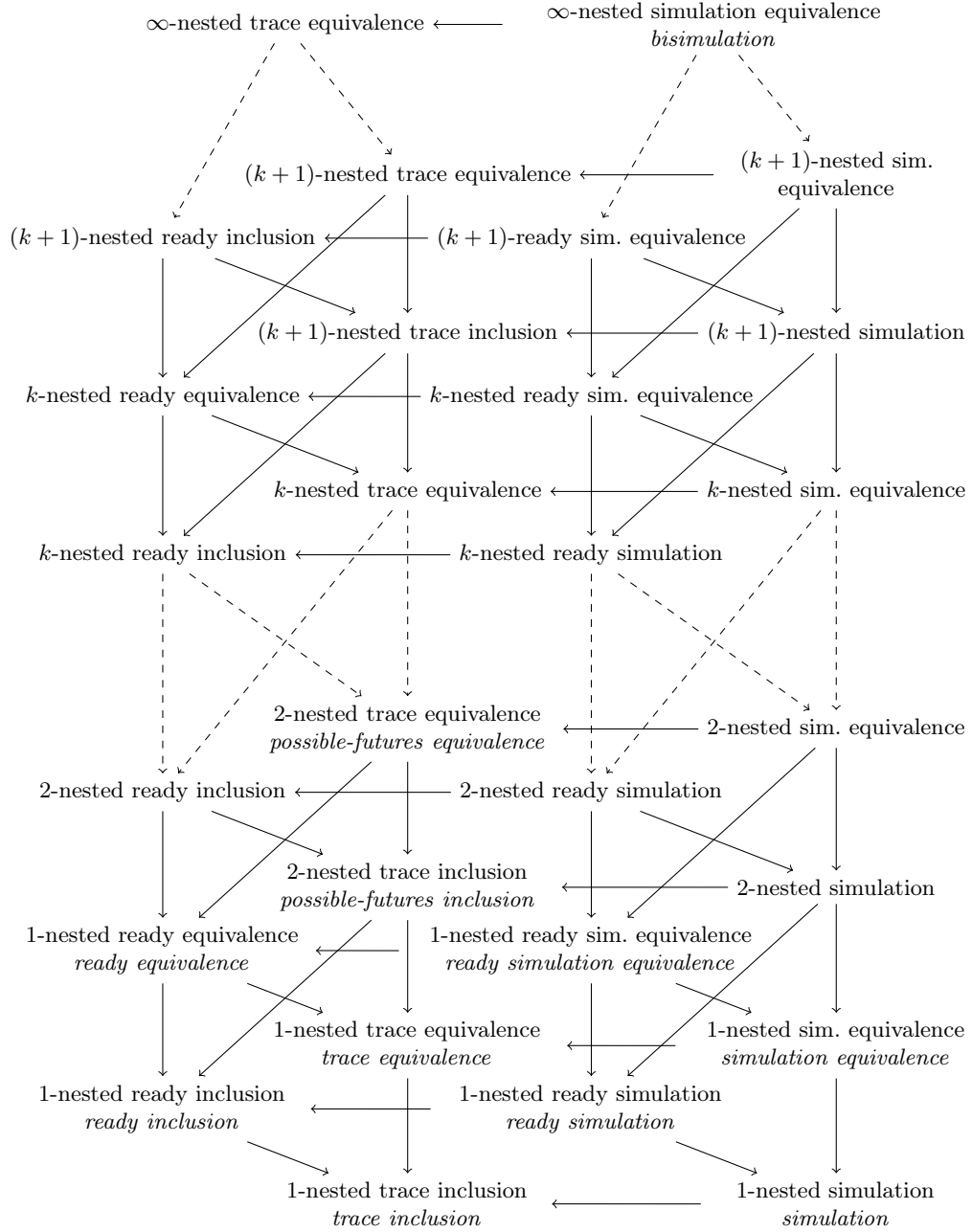
\begin{figure}[p]
  \centering
  \begin{tikzpicture}[->,xscale=.85,yscale=1.1]
    \tikzstyle{every node}=[font=\small,text badly centered]
    \node (traceeq) at (0,.3) {$\infty$-nested trace equivalence};
    \node (k+1-r-trace) at (-2,-2.4) {$( k+ 1)$-nested ready
      inclusion};
    \node (k+1-traceeq) at (2,-1.6) {$( k+ 1)$-nested trace equivalence};
    \node (k-r-traceeq) at (-2,-4.4) {$k$-nested ready equivalence};
    \node (k+1-trace) at (2,-3.6) {$( k+ 1)$-nested trace inclusion};
    \node (k-r-trace) at (-2,-6.4) {$k$-nested ready inclusion};
    \node (k-traceeq) at (2,-5.6) {$k$-nested trace equivalence};
    \node (2-r-trace) at (-2,-9.4) {$2$-nested ready inclusion};
    \node [text width=11.6em] (2-traceeq) at (2,-8.6) {$2$-nested trace
      equivalence \\ \emph{possible-futures equivalence}};
    \node [text width=11.4em] (1-r-traceeq) at (-2,-11.4) {$1$-nested ready
      equivalence \\ \emph{ready equivalence}};
    \node [text width=11.5em] (2-trace) at (2,-10.6) {$2$-nested trace
      inclusion \\ \emph{possible-futures inclusion}};
    \node [text width=11em] (1-r-trace) at (-2,-13.4) {$1$-nested ready
      inclusion \\ \emph{ready inclusion}};
    \node [text width=11.9em] (1-traceeq) at (2,-12.6) {$1$-nested trace
      equivalence \\ \emph{trace equivalence}};
    \node [text width=10.7em] (1-trace) at (2,-14.6) {$1$-nested trace
      inclusion \\ \emph{trace inclusion}};
    \node [text width=16em] (bisim) at (7,.3) {$\infty$-nested
      simulation equivalence \\ \emph{bisimulation}};
    \node (k+1-r-sim) at (5,-2.4) {$( k+ 1)$-ready sim.~equivalence};
    \node [text width=9em] (k+1-simeq) at (9,-1.6) {$( k+ 1)$-nested
      sim. equivalence};
    \node (k-r-simeq) at (5,-4.4) {$k$-nested ready sim.~equivalence};
    \node (k+1-sim) at (9,-3.6) {$( k+ 1)$-nested simulation};
    \node (k-r-sim) at (5,-6.4) {$k$-nested ready simulation};
    \node (k-simeq) at (9,-5.6) {$k$-nested sim.~equivalence};
    \node (2-r-sim) at (5,-9.4) {$2$-nested ready simulation};
    \node (2-simeq) at (9,-8.6) {$2$-nested sim.~equivalence};
    \node [text width=14.5em] (1-r-simeq) at (5,-11.4) {$1$-nested ready
      sim.~equivalence \\ \emph{ready simulation equivalence}};
    \node (2-sim) at (9,-10.6) {$2$-nested simulation};
    \node [text width=12em] (1-r-sim) at (5,-13.4) {$1$-nested ready
      simulation \\ \emph{ready simulation}};
    \node [text width=12em] (1-simeq) at (9,-12.6) {$1$-nested
      sim.~equivalence \\ \emph{simulation equivalence}};
    \node [text width=10em] (1-sim) at (9,-14.6) {$1$-nested
      simulation \\ \emph{simulation}};
    %
    \path (bisim) edge (traceeq);
    \path (k+1-r-sim) edge (k+1-r-trace);
    \path (k+1-simeq) edge (k+1-traceeq);
    \path (k-r-simeq) edge (k-r-traceeq);
    \path (k+1-sim) edge (k+1-trace);
    \path (k-r-sim) edge (k-r-trace);
    \path (k-simeq) edge (k-traceeq);
    \path (2-simeq) edge (2-traceeq);
    \path (2-r-sim) edge (2-r-trace);
    \path (2-sim) edge (2-trace);
    \path (1-r-simeq) edge (1-r-traceeq);
    \path (1-simeq) edge (1-traceeq);
    \path (1-r-sim) edge (1-r-trace);
    \path (1-sim) edge (1-trace);
    \path [dashed] (traceeq) edge (k+1-r-trace);
    \path [dashed] (traceeq) edge (k+1-traceeq);
    \path (k+1-r-trace) edge (k-r-traceeq);
    \path (k+1-r-trace) edge (k+1-trace);
    \path (k+1-traceeq) edge (k-r-traceeq);
    \path (k+1-traceeq) edge (k+1-trace);
    \path (k-r-traceeq) edge (k-r-trace);
    \path (k-r-traceeq) edge (k-traceeq);
    \path (k+1-trace) edge (k-r-trace);
    \path (k+1-trace) edge (k-traceeq);
    \path [dashed] (k-r-trace) edge (2-r-trace);
    \path [dashed] (k-r-trace) edge (2-traceeq);
    \path [dashed] (k-traceeq) edge (2-r-trace);
    \path [dashed] (k-traceeq) edge (2-traceeq);
    \path (2-r-trace) edge (1-r-traceeq);
    \path (2-r-trace) edge (2-trace);
    \path (2-traceeq) edge (1-r-traceeq);
    \path (2-traceeq) edge (2-trace);
    \path (1-r-traceeq) edge (1-r-trace);
    \path (1-r-traceeq) edge (1-traceeq);
    \path (2-trace) edge (1-r-trace);
    \path (2-trace) edge (1-traceeq);
    \path (1-r-trace) edge (1-trace);
    \path (1-traceeq) edge (1-trace);
    \path [dashed] (bisim) edge (k+1-r-sim);
    \path [dashed] (bisim) edge (k+1-simeq);
    \path (k+1-r-sim) edge (k-r-simeq);
    \path (k+1-r-sim) edge (k+1-sim);
    \path (k+1-simeq) edge (k-r-simeq);
    \path (k+1-simeq) edge (k+1-sim);
    \path (k-r-simeq) edge (k-r-sim);
    \path (k-r-simeq) edge (k-simeq);
    \path (k+1-sim) edge (k-r-sim);
    \path (k+1-sim) edge (k-simeq);
    \path [dashed] (k-r-sim) edge (2-r-sim);
    \path [dashed] (k-r-sim) edge (2-simeq);
    \path [dashed] (k-simeq) edge (2-r-sim);
    \path [dashed] (k-simeq) edge (2-simeq);
    \path (2-r-sim) edge (1-r-simeq);
    \path (2-r-sim) edge (2-sim);
    \path (2-simeq) edge (1-r-simeq);
    \path (2-simeq) edge (2-sim);
    \path (1-r-simeq) edge (1-r-sim);
    \path (1-r-simeq) edge (1-simeq);
    \path (2-sim) edge (1-r-sim);
    \path (2-sim) edge (1-simeq);
    \path (1-r-sim) edge (1-sim);
    \path (1-simeq) edge (1-sim);
  \end{tikzpicture}
  \caption{\label{fi:spectrum}%
    The linear-time--branching-time spectrum.  The nodes are different
    preorders and equivalences, and an edge $R_1\longrightarrow R_2$
    or $R_1\dashrightarrow R_2$ indicates that $R_1$ implies $R_2$ and
    that they are inequivalent in general.}
\end{figure}

To develop our systemization, we first have to clarify what precisely
is meant by a specification theory.  This is similar to the attempt at
a uniform framework of specifications
in~\cite{DBLP:conf/fase/BauerDHLLNW12}, but our focus is more general.
Inspired by the seminal work of
Pnueli~\cite{DBLP:conf/icalp/Pnueli85},
Larsen~\cite{DBLP:conf/concur/Larsen90}, and Hennessy and
Milner~\cite{DBLP:journals/jacm/HennessyM85}, we develop the point of
view that a behavioral specification theory is an expressive
specification formalism equipped with a mapping from models to their
characteristic formulae and with a refinement preorder which
generalizes the satisfaction relation between models and
specifications.

We then introduce a general scheme of linear and branching relation
families and show that variants of these characterize most of the
preorders and equivalences in the linear-time--branching-time spectrum
(notably also all of the ones mentioned above).  We transfer our
scheme to disjunctive modal transition systems and use it to define a
linear-time--branching-time spectrum of refinement preorders, each
giving rise to a specification theory for a different equivalence in
the linear-time--branching-time spectrum.

Specification theories as we define them here are useful for
incremental design and verification, as specifications can be refined
until a sufficient level of detail is reached.  The specification
theories developed for bisimilarity
in~\cite{DBLP:journals/scp/AcetoFFIP13, DBLP:conf/atva/BenesCK11,
  DBLP:conf/avmfss/Larsen89, DBLP:conf/concur/Larsen90,
  DBLP:conf/lics/LarsenX90, DBLP:journals/entcs/Raclet08,
  DBLP:journals/eatcs/AntonikHLNW08,
  DBLP:conf/ictac/CaillaudR12} also include operations of conjunction
and composition, hence allowing for compositional design and
verification.  What we present here is a first fundamental study of
specification theories for equivalences other than bisimilarity, and
we leave compositionality for future work.

To sum up, the contributions of this paper are as follows:
\begin{itemize}
\item a clarification of the basic theory of behavioral specification
  theories;
\item a uniform treatment of most of the relations in the
  linear-time--branching-time spectrum;
\item a uniform linear-time--branching-time spectrum of specification
  theories.
\end{itemize}
This article is a revised and extended version of the
paper~\cite{DBLP:conf/sofsem/FahrenbergL17} which has been presented
at the 43rd International Conference on Current Trends in Theory and
Practice of Computer Science (SOFSEM 2017) in Limerick, Ireland.
Compared to~\cite{DBLP:conf/sofsem/FahrenbergL17}, and in addition to
numerous small changes and improvements, motivation and examples,
proofs of all results, as well as two additional sections on a
game-based setting have been added to the paper.

\section{Specification Theories}

We start this paper by introducing and clarifying some concepts
related to models and specifications
from~\cite{DBLP:conf/concur/Larsen90, DBLP:conf/icalp/Pnueli85,
  DBLP:journals/jacm/HennessyM85}.  Let $\Proc$ be a set of models.

\begin{definition}
  A \emph{specification formalism} for $\Proc$ is a structure
  $( \Spec, \mathord{ \models})$, where $\Spec$ is a set of
  specifications and $\mathord{ \models}\subseteq \Proc\times \Spec$
  is the satisfaction relation.
\end{definition}

The models in $\Proc$ serve to represent computing systems, and the
specifications in $\Spec$ represent properties of such systems.  The
\emph{model checking} problem is, given $\mcal I\in \Proc$ and
$\mcal S\in \Spec$, to decide whether $\mcal I\models \mcal S$.

\begin{definition}
  For $\mcal S\in \Spec$,
  $\Mod{ \mcal S}=\{ \mcal I\in \Proc\mid \mcal I\models \mcal S\}$
  denotes its set of \emph{implementations}.
\end{definition}

That is, $\Mod{ \mcal S}$ is the set of models which adhere to the
specification $\mcal S$.  Note that $\models$ and $\Modnull$ are
inter-definable: for $\mcal I\in \Proc$ and $\mcal S\in \Spec$,
$\mcal I\models \mcal S$ iff $\mcal I\in \Mod{ \mcal S}$.

\begin{definition}
  For $\mcal S_1, \mcal S_2\in \Spec$,
  \begin{itemize}
  \item $\mcal S_1$ is \emph{semantically refined} by $\mcal S_2$,
    denoted $\mcal S_1\preceq \mcal S_2$, if
    $\Mod{ \mcal S_1}\subseteq \Mod{ \mcal S_2}$;
  \item $\mcal S_1$ is \emph{semantically equivalent} to $\mcal S_2$,
    denoted $\mcal S_1\approxeq \mcal S_2$, if
    $\Mod{ \mcal S_1}= \Mod{ \mcal S_2}$.
  \end{itemize}
\end{definition}

Hence $\mcal S_1\preceq \mcal S_2$ iff every implementation of
$\mcal S_1$ is also an implementation of $\mcal S_2$, that is, if it
holds for every model that once it satisfies $\mcal S_1$, it
automatically also satisfies $\mcal S_2$.

\begin{definition}
  For $\mcal I\in \Proc$,
  $\Th{ \mcal I}=\{ \mcal S\in \Spec\mid \mcal I\models \mcal S\}$
  denotes its set of \emph{theories}.
\end{definition}

That is, $\Th{ \mcal I}$ is the set of all specifications which are
satisfied by $\mcal I$.  Again, $\models$ and $\Thnull$ are
inter-definable: for $\mcal I\in \Proc$ and $\mcal S\in \Spec$, $\mcal
I\models \mcal S$ iff $\mcal S\in \Th{ \mcal I}$.

As~\cite{DBLP:conf/concur/Larsen90} notes,
the functions $\Modnull: \Spec\to 2^\Proc$ and
$\Thnull: \Proc\to 2^\Spec$ can be extended to functions on sets of
specifications and models by
$\Mod{ A}= \bigcap_{ \mcal S\in A} \Mod{ \mcal S}$ and
$\Th{ B}= \bigcap_{ \mcal I\in A} \Th{ \mcal I}$, and then
$\Modnull: 2^\Spec\rightleftarrows 2^\Proc: \Thnull$ forms a Galois
connection.

\begin{definition}
  For $\mcal I_1, \mcal I_2\in \Proc$,
  \begin{itemize}
  \item $\mcal I_1$ is \emph{behaviorally refined} by $\mcal I_2$,
    denoted $\mcal I_1\sqsubseteq \mcal I_2$, if
    $\Th{ \mcal I_1}\subseteq \Th{ \mcal I_2}$;
  \item $\mcal I_1$ is \emph{behaviorally equivalent} to $\mcal I_2$,
    denoted $\mcal I_1\boxeq \mcal I_2$, if
    $\Th{ \mcal I_1}= \Th{ \mcal I_2}$;
  \end{itemize}
\end{definition}

Hence $\mcal I_1\boxeq \mcal I_2$ iff $\mcal I_1$ and $\mcal I_2$
satisfy precisely the same specifications.

In terminology first introduced
in~\cite{DBLP:journals/jacm/HennessyM85}, the specification formalism
$( \Spec, \mathord{ \models})$ is said to be \emph{adequate} for
$\boxeq$.  In fact, the usual point of view is sightly different:
normally, $\Proc$ comes equipped with some equivalence relation
$\sim$, and then one says that $( \Spec, \mathord{ \models})$ is
adequate for $( \Proc, \mathord{ \sim})$ if
$\mathord{ \boxeq}= \mathord{ \sim}$.  It is clear that $\sim$ is not
needed to reason about specification formalisms; we can simply declare
that $( \Spec, \mathord{ \models})$ is adequate for whatever model
equivalence $\boxeq$ it \emph{induces}.

\begin{definition}
  A specification $\mcal S\in \Spec$ is a \emph{characteristic
    formula} for $\mcal I\in \Proc$ if $\mcal I\models \mcal S$ and
  for all $\mcal I'\models \mcal S$, $\mcal I'\boxeq \mcal I$.
\end{definition}

This was introduced in~\cite{DBLP:conf/icalp/Pnueli85}.  We record the
following property which follows directly from the definitions:

\begin{lemma}
  \label{le:char-Th}
  A specification $\mcal S\in \Spec$ is a characteristic formula for
  $\mcal I\in \Proc$ iff it holds for all $\mcal I'\in \Proc$ that
  $\mcal S\in \Th{ \mcal I'}$ iff $\Th{ \mcal I}= \Th{ \mcal
    I'}$. \qed
\end{lemma}

Not surprisingly, characteristic formulae are unique up to semantic
equivalence:

\begin{lemma}
  If $\mcal S_1$ and $\mcal S_2$ are characteristic formulae for
  $\mcal I\in \Proc$, then $\mcal S_1\approxeq \mcal S_2$.
\end{lemma}

\begin{proof}
  By Lemma~\ref{le:char-Th}, it holds for all $\mcal I'\in \Proc$ that
  $\mcal I'\models \mcal S_1$ iff $\Th{ \mcal I}= \Th{ \mcal I'}$, iff
  $\mcal I'\models \mcal S_2$. \qed
\end{proof}

Again following~\cite{DBLP:conf/icalp/Pnueli85}, the specification
formalism $( \Spec, \mathord{ \models})$ is said to be
\emph{expressive} for $\Proc$ if every $\mcal I\in \Proc$ admits a
characteristic formula.  Our first result seems to have been
overlooked in~\cite{DBLP:conf/concur/Larsen90,
  DBLP:conf/icalp/Pnueli85, DBLP:journals/jacm/HennessyM85}: in an
expressive specification formalism, the preorder $\sqsubseteq$ is, in
fact, an equivalence.

\begin{proposition}
  \label{le:expr->symm}
  If $\Spec$ is expressive for $\Proc$, then
  $\mathord{ \sqsubseteq}= \mathord{ \boxeq}$.
\end{proposition}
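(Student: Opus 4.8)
The plan is to reduce the statement to a single non-trivial inclusion. Since $\mathord{\boxeq}= \mathord{\sqsubseteq}\cap \mathord{\sqsupseteq}$ by definition, the inclusion $\mathord{\boxeq}\subseteq \mathord{\sqsubseteq}$ is immediate, so the only content is the reverse inclusion $\mathord{\sqsubseteq}\subseteq \mathord{\boxeq}$; equivalently, I must show that the preorder $\sqsubseteq$ is symmetric. Accordingly I would fix $\mcal I_1, \mcal I_2\in \Proc$ with $\mcal I_1\sqsubseteq \mcal I_2$, that is, $\Th{\mcal I_1}\subseteq \Th{\mcal I_2}$, and aim to derive $\mcal I_1\boxeq \mcal I_2$.

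The key device is to apply expressiveness to $\mcal I_1$: by hypothesis there is a characteristic formula $\mcal S_1\in \Spec$ for $\mcal I_1$. From here the argument is a short definition chase. Since $\mcal I_1\models \mcal S_1$, we have $\mcal S_1\in \Th{\mcal I_1}$, and the assumed inclusion $\Th{\mcal I_1}\subseteq \Th{\mcal I_2}$ then gives $\mcal S_1\in \Th{\mcal I_2}$, i.e.\ $\mcal I_2\models \mcal S_1$. Now the defining property of a characteristic formula---every model satisfying $\mcal S_1$ is $\boxeq$-equivalent to $\mcal I_1$---yields $\mcal I_2\boxeq \mcal I_1$, and since $\boxeq$ is symmetric we obtain $\mcal I_1\boxeq \mcal I_2$, as required.

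I do not expect any real obstacle: the proof is essentially immediate once expressiveness is invoked. The one point worth stating carefully is the choice of \emph{which} model's characteristic formula to use. Invoking expressiveness on $\mcal I_1$, the left-hand side of $\mcal I_1\sqsubseteq \mcal I_2$, is exactly what lets the hypothesis $\Th{\mcal I_1}\subseteq \Th{\mcal I_2}$ do its work, since it is $\Th{\mcal I_1}$ that is assumed to be contained in $\Th{\mcal I_2}$. Had I instead taken a characteristic formula $\mcal S_2$ for $\mcal I_2$, membership $\mcal S_2\in \Th{\mcal I_2}$ would give no information via the assumed inclusion, and the conclusion would not follow directly; so selecting $\mcal I_1$'s characteristic formula is the crux of the (admittedly brief) argument.
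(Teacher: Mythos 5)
Your proof is correct and follows exactly the same route as the paper's: take a characteristic formula $\mcal S_1$ for the left-hand model $\mcal I_1$, push it through the inclusion $\Th{\mcal I_1}\subseteq \Th{\mcal I_2}$ to get $\mcal I_2\models \mcal S_1$, and conclude $\mcal I_2\boxeq \mcal I_1$ from the characteristic-formula property. Your added remark about why one must invoke expressiveness on $\mcal I_1$ rather than $\mcal I_2$ is a sensible clarification but does not change the argument.
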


\begin{proof}
  Let $\mcal I_1, \mcal I_2\in \Proc$ and assume
  $\mcal I_1\sqsubseteq \mcal I_2$.  Let $\mcal S_1\in \Spec$ be a
  characteristic formula for $\mcal I_1$, then
  $\mcal S_1\in \Th{ \mcal I_1}$.  But
  $\Th{ \mcal I_1}\subseteq \Th{ \mcal I_2}$, hence
  $\mcal S_1\in \Th{ \mcal I_2}$.  By Lemma~\ref{le:char-Th}, this
  implies $\mcal I_2\boxeq \mcal I_1$. \qed
\end{proof}

\begin{example}
  A very simple specification formalism is $\Spec= 2^\Proc$, that is,
  specifications are sets of models.  In that case,
  $\mathord{ \models}= \mathord{ \in}$ is the element-of relation, and
  $\Mod{ \mcal S}= \mcal S$, thus $\mcal S_1\preceq \mcal S_2$ iff
  $\mcal S_1\subseteq \mcal S_2$ and $\mcal S_1\approxeq \mcal S_2$
  iff $\mcal S_1= \mcal S_2$.

  Every $\mcal I\in \Proc$ has characteristic formula
  $\{ \mcal I\}\in \Spec$, hence $2^\Proc$ is expressive for $\Proc$,
  so that $\mathord{ \sqsubseteq}= \mathord{ \boxeq}$.  Further, if
  $\mcal I_1\boxeq \mcal I_2$, then $\mcal I_2\in\{ \mcal I_1\}$,
  hence $\mcal I_1= \mcal I_2$.  We have shown that $2^\Proc$ is
  adequate for equality~$=$. \qed
\end{example}

\begin{example}
  \emph{Hennessy-Milner logic}~\cite{DBLP:journals/jacm/HennessyM85}
  is a well-known specification formalism for labeled transition
  systems (see Definition~\ref{de:lts} of $\LTS$ below).  It consists of
  formulae generated by the abstract syntax
\begin{equation*}
  \HML\ni \phi, \psi\Coloneqq \ltrue\mid \lfalse\mid \phi\land
  \psi\mid \phi\lor \psi\mid \langle a\rangle \phi\mid[ a] \phi \quad(
  a\in \Sigma)\,,
\end{equation*}
with semantics defined by $\Mod \ltrue= \LTS$,
$\Mod \lfalse= \emptyset$,
$\Mod{ \phi\land \psi}= \Mod \phi\cap \Mod \psi$,
$\Mod{ \phi\lor \psi}= \Mod \phi\cup \Mod \psi$, and
\begin{align*}
  &\Mod{ \langle a\rangle \phi}=\{( S, s^0, T)\in \LTS\mid \exists(
  s^0, a, s)\in T:( S, s, T)\in \Mod \phi\}\,, \\
  &\Mod{ [ a] \phi}=\{( S, s^0, T)\in \LTS\mid \forall( s^0, a,
  s)\in T:( S, s, T)\in \Mod \phi\}\,.
\end{align*}

$\HML$ admits a semantic form of negation, \emph{complementation},
which is defined inductively by $\ltrue^c= \lfalse$,
$\lfalse^c= \ltrue$, $( \phi\land \psi)^c= \phi^c\lor \psi^c$,
$( \phi\lor \psi)^c= \phi^c\land \psi^c$,
$( \langle a\rangle \phi)^c=[ a] \phi^c$, and
$([ a] \phi)^c= \langle a\rangle \phi^c$.  It can be
shown~\cite{books/AcetoILS07} that for all $\phi\in \HML$,
$\Mod{ \phi^c}= \Proc\setminus \Mod \phi$.

Now let $\mcal I_1, \mcal I_2\in \LTS$ and assume
$\mcal I_1\sqsubseteq \mcal I_2$, then it holds for all $\phi\in \HML$
that $\mcal I_1\models \phi$ implies $\mcal I_2\models \phi$.  By
contraposition, $\mcal I_2\models \phi^c$ implies
$\mcal I_1\models \phi^c$ for all $\phi\in \HML$, so that
$\mcal I_2\sqsubseteq \mcal I_1$.  We have shown that
$\mathord{ \sqsubseteq}= \mathord{ \boxeq}$.  In fact, by the
Hennessy-Milner theorem~\cite{DBLP:journals/jacm/HennessyM85},
$\boxeq$ is bisimilarity, so that $\HML$ is adequate for bisimilarity.

Even though $\mathord{ \sqsubseteq}= \mathord{ \boxeq}$, it can be
shown~\cite{books/AcetoILS07} that $\HML$ is \emph{not}
expressive. \qed
\end{example}

\section{Behavioral Specification Theories}

We are ready to introduce what we mean by a behavioral specification
theory: an expressive specification formalism with extra structure.
This mainly sums up and clarifies ideas already present
in~\cite{DBLP:conf/concur/Larsen90, DBLP:conf/fase/BauerDHLLNW12}, but
we make a connection between specification theories and characteristic
formulae which is new.  Specifically, we will see that a central
ingredient in a specification theory is a function $\chi$ which maps
models to their characteristic formulae.

\begin{definition}
  A (behavioral) \emph{specification theory} for $\Proc$ is
  a~specification formalism $( \Spec, \mathord{ \models})$ for $\Proc$
  together with a mapping $\chi: \Proc\to \Spec$ and a~preorder $\le$
  on $\Spec$, called \emph{modal refinement}, subject to the following
  conditions:
  \begin{itemize}
  \item for every $\mcal I\in \Proc$, $\chi( \mcal I)$ is
    a characteristic formula for $\mcal I$;
  \item for all $\mcal I\in \Proc$ and all $\mcal S\in \Spec$, $\mcal
    I\models \mcal S$ iff $\chi( \mcal I)\le \mcal S$.
  \end{itemize}
\end{definition}

The equivalence relation
$\mathord{ \equiv}= \mathord{ \le}\cap \mathord{ \ge}$ on $\Spec$ is
called \emph{modal equivalence}.  Note that specification theories are
indeed expressive; also, $\models$ is fully determined by~$\le$.

In a categorical sense, the function $\chi: \Proc\to \Spec$ is a
\emph{section} of the Galois connection
$\Modnull: 2^\Spec\rightleftarrows 2^\Proc: \Thnull$.  Indeed, we have
$\chi( \mcal I)\in \Th{ \mcal I}$ for all $\mcal I\in \Proc$ and
$\mcal I'\boxeq \mcal I$ for all $\mcal I'\in \Mod{ \chi( \mcal I)}$,
and these properties are characterizing for $\chi$.  Further,
$\Th{ \mcal I}=\{ \mcal S\mid \chi( \mcal I)\le \mcal S\}= \chi( \mcal
I)\mathord\uparrow$ is the \emph{upward closure} of $\chi( \mcal I)$.

We sum up a few consequences of the definition: modal refinement
(equivalence, resp.) implies semantic refinement (equivalence, resp.),
and on characteristic formulae, all refinements and equivalences
collapse.

\begin{proposition}
  \label{pr:spec-basic}
  Let $( \Spec, \chi, \mathord{ \le})$ be a specification theory for
  $\Proc$.
  \begin{enumerate}
  \item For all $\mcal S_1, \mcal S_2\in \Spec$,
    $\mcal S_1\le \mcal S_2$ implies $\mcal S_1\preceq \mcal S_2$ and
    $\mcal S_1\equiv \mcal S_2$ implies
    $\mcal S_1\approxeq \mcal S_2$.
  \item For all $\mcal I_1, \mcal I_2\in \Proc$, the following are
    equivalent: $\chi( \mcal I_1)\le \chi( \mcal I_2)$,
    $\chi( \mcal I_2)\le \chi( \mcal I_1)$,
    $\chi( \mcal I_1)\preceq \chi( \mcal I_2)$,
    $\chi( \mcal I_2)\preceq \chi( \mcal I_1)$,
    $\mcal I_1\boxeq \mcal I_2$.
  \end{enumerate}
\end{proposition}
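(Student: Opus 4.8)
The plan is to obtain Part~1 directly from transitivity of the modal-refinement preorder $\le$ together with the defining equivalence $\mcal I\models \mcal S$ iff $\chi( \mcal I)\le \mcal S$, and then to prove Part~2 by routing every one of the four refinement statements through $\boxeq$, which serves as a common hub.

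For Part~1, suppose $\mcal S_1\le \mcal S_2$ and take any $\mcal I\in \Mod{ \mcal S_1}$. Then $\mcal I\models \mcal S_1$, so $\chi( \mcal I)\le \mcal S_1$; by transitivity $\chi( \mcal I)\le \mcal S_2$, hence $\mcal I\models \mcal S_2$, \ie~$\mcal I\in \Mod{ \mcal S_2}$. This gives $\Mod{ \mcal S_1}\subseteq \Mod{ \mcal S_2}$, that is, $\mcal S_1\preceq \mcal S_2$. The second assertion, that $\mcal S_1\equiv \mcal S_2$ implies $\mcal S_1\approxeq \mcal S_2$, then follows by applying this implication to both $\mcal S_1\le \mcal S_2$ and $\mcal S_2\le \mcal S_1$.

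For Part~2 I would first record the two facts about a characteristic formula $\chi( \mcal I)$ used repeatedly: $\mcal I\models \chi( \mcal I)$ (so $\mcal I\in \Mod{ \chi( \mcal I)}$ and $\chi( \mcal I)\in \Th{ \mcal I}$), and every $\mcal I'$ with $\mcal I'\models \chi( \mcal I)$ satisfies $\mcal I'\boxeq \mcal I$. Rather than chasing all pairwise implications, I would close two triangles sharing the vertex $\mcal I_1\boxeq \mcal I_2$. In the first triangle, $\chi( \mcal I_1)\le \chi( \mcal I_2)$ implies $\chi( \mcal I_1)\preceq \chi( \mcal I_2)$ by Part~1; from $\chi( \mcal I_1)\preceq \chi( \mcal I_2)$ and $\mcal I_1\in \Mod{ \chi( \mcal I_1)}\subseteq \Mod{ \chi( \mcal I_2)}$ one gets $\mcal I_1\models \chi( \mcal I_2)$, whence $\mcal I_1\boxeq \mcal I_2$ since $\chi( \mcal I_2)$ is characteristic; and conversely $\mcal I_1\boxeq \mcal I_2$ gives $\Th{ \mcal I_1}= \Th{ \mcal I_2}$, so $\chi( \mcal I_2)\in \Th{ \mcal I_2}= \Th{ \mcal I_1}$, \ie~$\mcal I_1\models \chi( \mcal I_2)$, which by the defining equivalence means $\chi( \mcal I_1)\le \chi( \mcal I_2)$. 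The second triangle, relating $\chi( \mcal I_2)\le \chi( \mcal I_1)$, $\chi( \mcal I_2)\preceq \chi( \mcal I_1)$ and $\mcal I_1\boxeq \mcal I_2$, is identical after swapping the two indices and using symmetry of $\boxeq$. Together the two triangles make all five statements equivalent.

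The argument is elementary, so the only real obstacle is bookkeeping: keeping the modal relation $\le$ and the semantic relation $\preceq$ apart, and remembering that it is precisely the defining equivalence $\mcal I\models \mcal S\iff \chi( \mcal I)\le \mcal S$ that bridges them, while the characteristic-formula property is what bridges to $\boxeq$. The one step needing a moment's care is $\mcal I_1\boxeq \mcal I_2\Rightarrow \chi( \mcal I_1)\le \chi( \mcal I_2)$, where one must pass through the equality of theory sets and invoke $\mcal I_2\models \chi( \mcal I_2)$ rather than attempting to produce the refinement $\le$ directly.
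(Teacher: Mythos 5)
Your proof is correct and follows essentially the same route as the paper: Part~1 by transitivity of $\le$ through the defining equivalence $\mcal I\models\mcal S\iff\chi(\mcal I)\le\mcal S$, and Part~2 by establishing the cycle among $\chi(\mcal I_1)\le\chi(\mcal I_2)$, $\chi(\mcal I_1)\preceq\chi(\mcal I_2)$ and $\mcal I_1\boxeq\mcal I_2$, then swapping indices. The only cosmetic difference is that the paper recovers $\chi(\mcal I_1)\le\chi(\mcal I_2)$ directly from $\mcal I_1\in\Mod{\chi(\mcal I_2)}$ in the $\preceq$ step, whereas you recover it from $\boxeq$ via equality of theory sets; both rest on the same defining equivalence.
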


\begin{proof}
  The first claim follows from transitivity of $\le$: if
  $\mcal I\in \Mod{ \mcal S_1}$, then
  $\chi( \mcal I)\le \mcal S_1\le \mcal S_2$, hence
  $\chi( \mcal I)\le \mcal S_2$, thus $\mcal I\in \Mod{ \mcal S_2}$.

  For the second claim, let $\mcal I_1, \mcal I_2\in \Proc$.
  \begin{itemize}
  \item If $\chi( \mcal I_1)\le \chi( \mcal I_2)$, then
    $\chi( \mcal I_1)\preceq \chi( \mcal I_2)$ by the first part.
  \item If $\chi( \mcal I_1)\preceq \chi( \mcal I_2)$, then
    $\Mod{ \chi( \mcal I_1)}\subseteq \Mod{ \chi( \mcal I_2)}$.  But
    $\mcal I_1\in \Mod{ \chi( \mcal I_1)}$, hence
    $\mcal I_1\in \Mod{ \chi( \mcal I_2)}$, which, as
    $\chi( \mcal I_2)$ is characteristic, implies
    $\mcal I_1\boxeq \mcal I_2$.  Also,
    $\mcal I_1\in \Mod{ \chi( \mcal I_2)}$ implies
    $\chi( \mcal I_1)\le \chi( \mcal I_2)$.
  \item Assume $\mcal I_1\boxeq \mcal I_2$ and let
    $\mcal I\in \Mod{ \chi( \mcal I_1)}$.  Then
    $\mcal I\boxeq \mcal I_1$, hence $\mcal I\boxeq \mcal I_2$, which
    implies $\mcal I\in \Mod{ \chi( \mcal I_2)}$.  We have shown that
    $\chi( \mcal I_1)\preceq \chi( \mcal I_2)$.
  \end{itemize}

  We have shown that $\chi( \mcal I_1)\le \chi( \mcal I_2)$ iff
  $\chi( \mcal I_1)\preceq \chi( \mcal I_2)$ iff
  $\mcal I_1\boxeq \mcal I_2$, and reversing the roles of $\mcal I_1$
  and $\mcal I_2$ gives the other equivalences.  \qed
\end{proof}

The second part of the proposition means that the mapping
$\chi: \Proc\to \Spec$ is an \emph{embedding up to equivalence}: for
all $\mcal I_1, \mcal I_2\in \Proc$, $\mcal I_1\boxeq \mcal I_2$ iff
$\chi( \mcal I_1)\equiv \chi( \mcal I_2)$ iff
$\chi( \mcal I_1)\approxeq \chi( \mcal I_2)$.  Because of this, most
work in specification theories \emph{identifies} models $\mcal I$ with
their characteristic formulae $\chi( \mcal I)$; for reasons of
clarity, we will not make this identification here.

We finish this section with a lemma which shows that the property of
$\chi( \mcal I)$ being characteristic formulae follows when $\le$ is
symmetric on models.

\begin{lemma}
  \label{le:specth}
  Let $\Spec$ be a set, $\chi: \Proc\to \Spec$ a mapping and
  $\mathord{ \le}\subseteq \Spec\times \Spec$ a preorder.  If the
  restriction of $\le$ to the image of $\chi$ is symmetric, then $(
  \Spec, \chi, \mathord{ \le})$ is a specification theory for
  $\Proc$.
\end{lemma}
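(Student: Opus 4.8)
The plan is to observe first that the statement supplies $\Spec$, $\chi$, and $\le$ but \emph{no} satisfaction relation, so the initial task is to define $\models$ in the only way compatible with the second axiom of a specification theory, namely
\begin{equation*}
  \mcal I\models \mcal S\quad\text{iff}\quad \chi( \mcal I)\le \mcal S\,.
\end{equation*}
With this stipulation the second bullet of the definition holds by construction, and all the derived notions ($\Modnull$, $\Thnull$, $\sqsubseteq$, $\boxeq$) become available. Everything then reduces to verifying the first bullet: that $\chi( \mcal I)$ is a characteristic formula for each $\mcal I\in \Proc$.

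For this I would unfold the definition of characteristic formula into its two requirements. The first, $\mcal I\models \chi( \mcal I)$, unfolds to $\chi( \mcal I)\le \chi( \mcal I)$ and is immediate from reflexivity of the preorder $\le$. The second requires that every $\mcal I'$ with $\mcal I'\models \chi( \mcal I)$ satisfies $\mcal I'\boxeq \mcal I$. Here I would start from $\mcal I'\models \chi( \mcal I)$, \ie~$\chi( \mcal I')\le \chi( \mcal I)$, and invoke the hypothesis: since both $\chi( \mcal I')$ and $\chi( \mcal I)$ lie in the image of $\chi$, on which $\le$ is symmetric, I also obtain $\chi( \mcal I)\le \chi( \mcal I')$.

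The final step converts this two-sided refinement between the characteristic formulae into equality of theory sets. Given $\chi( \mcal I')\le \chi( \mcal I)$ together with $\chi( \mcal I)\le \chi( \mcal I')$, transitivity of $\le$ shows that for every $\mcal S$ we have $\chi( \mcal I)\le \mcal S$ iff $\chi( \mcal I')\le \mcal S$; that is, $\Th{ \mcal I}= \Th{ \mcal I'}$, which is exactly $\mcal I\boxeq \mcal I'$. This completes the verification that $\chi( \mcal I)$ is characteristic, and hence that $( \Spec, \chi, \mathord{ \le})$ is a specification theory for $\Proc$.

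The main obstacle is conceptual rather than technical: recognizing that $\models$ is not part of the given data and must be supplied, and then pinpointing exactly where symmetry is indispensable. One-sided refinement $\chi( \mcal I')\le \chi( \mcal I)$ alone yields, via transitivity, only the inclusion $\Th{ \mcal I}\subseteq \Th{ \mcal I'}$; the reverse inclusion---and thus the full equivalence $\boxeq$ demanded by the notion of characteristic formula---is precisely what symmetry on the image of $\chi$ delivers. No step of the argument uses anything beyond the preorder axioms and this hypothesis.
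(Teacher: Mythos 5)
Your proof is correct and follows essentially the same route as the paper's: define $\models$ by $\mcal I\models \mcal S$ iff $\chi(\mcal I)\le \mcal S$, get $\mcal I\models\chi(\mcal I)$ from reflexivity, and use symmetry on the image of $\chi$ plus transitivity to turn $\chi(\mcal I')\le\chi(\mcal I)$ into the two inclusions giving $\Th{\mcal I'}=\Th{\mcal I}$. Your explicit remark that the satisfaction relation is not part of the given data and must be supplied is a point the paper leaves implicit, but the argument itself is the same.
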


\begin{proof}
We know that $\chi( \mcal I_1)\le \chi( \mcal I_2)$ iff
$\chi( \mcal I_2)\le \chi( \mcal I_1)$ for all
$\mcal I_1, \mcal I_2\in \Proc$.  Let $\mcal I\in \Proc$; we need to
show that $\chi( \mcal I)$ is a characteristic formula for $\mcal I$.

First, by reflexivity of $\le$, $\chi( \mcal I)\le \chi( \mcal I)$
implies $\mcal I\models \chi( \mcal I)$.  Now let $\mcal I'\in \Proc$
and assume $\mcal I'\models \chi( \mcal I)$, that is,
$\chi( \mcal I')\le \chi( \mcal I)$.  We show that
$\Th{ \mcal I'}\supseteq \Th{ \mcal I}$.  Let
$\mcal S\in \Th{ \mcal I}$, then $\mcal I\models \mcal S$, that is,
$\chi( \mcal I)\le \mcal S$.  But $\le$ is transitive, so
$\chi( \mcal I')\le \chi( \mcal I)\le \mcal S$ implies
$\chi( \mcal I')\le \mcal S$.  Hence $\mcal I'\models \mcal S$, so
that $\mcal S\in \Th{ \mcal I'}$.

We have shown that $\chi( \mcal I')\le \chi( \mcal I)$ implies
$\Th{ \mcal I'}\supseteq \Th{ \mcal I}$.  By symmetry of $\le$ on the
image of $\chi$, $\chi( \mcal I')\le \chi( \mcal I)$ implies
$\chi( \mcal I)\le \chi( \mcal I')$, which in turn implies
$\Th{ \mcal I}\supseteq \Th{ \mcal I'}$.  We have proven that
$\mcal I'\models \chi( \mcal I)$ implies
$\Th{ \mcal I'}= \Th{ \mcal I}$. \qed
\end{proof}

\begin{example}
  We have seen that Hennessy-Milner logic is not expressive, hence
  $\HML$ cannot serve as basis for a specification theory for $\LTS$.
  The standard remedy for expressivity is to add recursion to the
  logic, see~\cite{DBLP:journals/tcs/Larsen90, books/AcetoILS07}; we
  will in Sect.~\ref{se:dmts} below expose a specification theory
  based on Hennessy-Milner logic with recursion and maximal fixed
  points.

  For our other example, $\Spec= 2^\Proc$, we can let
  $\chi( \mcal I)=\{ \mcal I\}$ and
  $\mathord{ \le}= \mathord{ \subseteq}$.  Then $\mcal I\in \mcal S$
  iff $\{ \mcal I\}\subseteq \mcal S$, \ie~$\mcal I\models \mcal S$
  iff $\chi( \mcal I)\le \mcal S$.  This shows that
  $( 2^\Proc, \chi, \mathord{ \subseteq})$ is a specification theory
  for $\Proc$ (which is adequate and expressive for equality).
\end{example}

\section{Disjunctive Modal Transition Systems}
\label{se:dmts}

We proceed to recall disjunctive modal transition systems and how
these can serve as a specification theory for bisimilarity.  The
material in this section is well-known, but our definitions from the
previous sections allow for much more succinctness, for example in
Proposition~\ref{pr:dmts-bisim} below.

From now on, $\Proc$ will be the set $\LTS$ of (finite) \emph{labeled
  transition systems} over a fixed finite alphabet $\Sigma$:

\begin{definition}
  \label{de:lts}
  A \emph{labeled transition system} $( S, s^0, T)$ consists of a
  finite set of states $S$, an initial state $s^0\in S$, and
  transitions $T\subseteq S\times \Sigma\times S$ labeled with symbols
  from $\Sigma$.
\end{definition}

Recall~\cite{DBLP:journals/tcs/Milner83, DBLP:conf/tcs/Park81} that
two LTS $( S_1, s^0_1, T_1)$ and $( S_2, s^0_2, T_2)$ are
\emph{bisimilar} if there exists a relation $R\subseteq S_1\times S_2$
such that $( s^0_1, s^0_2)\in R$ and for all $( s_1, s_2)\in R$,
\begin{itemize}
\item for all $( s_1, a, t_1)\in T_1$, there is $( s_2, a, t_2)\in
  T_2$ with $( t_1, t_2)\in R$,
\item for all $( s_2, a, t_2)\in T_2$, there is $( s_1, a, t_1)\in
  T_1$ with $( t_1, t_2)\in R$.
\end{itemize}

\begin{definition}
  A \emph{disjunctive modal transition system} (DMTS) is a tuple
  $\mcal D=( S, S^0, \omay, \omust)$ consisting of finite sets
  $S\supseteq S^0$ of states and initial states, a
  \emph{may}-transition relation
  $\omay\subseteq S\times \Sigma\times S$, and a \emph{disjunctive
    must}-transition relation
  $\omust\subseteq S\times 2^{ \Sigma\times S}$.  It is assumed that
  for all $( s, N)\in \omust$ and all $( a, t)\in N$,
  $( s, a, t)\in \omay$.
\end{definition}

DMTS were introduced in~\cite{DBLP:conf/lics/LarsenX90}, but note that
we permit several (or no) initial states here.  The set of DMTS is
denoted $\DMTS$.

As customary, we write $s\may a t$ instead of $( s, a, t)\in \omay$
and $s\must{} N$ instead of $( s, N)\in \omust$.  The intuition is
that may-transitions $s\may a t$ specify which transitions are
permitted in an implementation, whereas a~must-transition $s\must{} N$
stipulates a disjunctive requirement: at least one of the choices
$( a, t)\in N$ has to be implemented.

\begin{definition}
  A \emph{modal refinement} of two DMTS
  $\mcal D_1=( S_1, S^0_1, \omay_1, \omust_1)$,
  $\mcal D_2=( S_2, S^0_2, \omay_2, \omust_2)$ is a relation
  $R\subseteq S_1\times S_2$ for which it holds of all
  $( s_1, s_2)\in R$ that
\begin{itemize}
\item
  $\forall s_1\may a_1 t_1: \exists s_2\may a_2 t_2:( t_1, t_2)\in R$;
\item
  $\forall s_2\must{}_2 N_2: \exists s_1\must{}_1 N_1: \forall( a,
  t_1)\in N_1: \exists( a, t_2)\in N_2:( t_1, t_2)\in R$;
\end{itemize}
and such that for all $s^0_1\in S^0_1$, there exists $s^0_2\in S^0_2$
for which $( s^0_1, s^0_2)\in R$.
\end{definition}

Let
$\mathord{ \le}\subseteq \DMTS\times \DMTS$ be the relation defined by
$\mcal D_1\le \mcal D_2$ iff there exists a modal refinement as above
(a \emph{witness} for $\mcal D_1\le \mcal D_2$).  Clearly, $\le$ is a
preorder.

LTS are embedded into $\DMTS$ as follows.  For an LTS
$\mcal I=( S, s^0, T)$, let
$\chi( \mcal I)=( S,\{ s^0\}, \omay, \omust)$ be the DMTS with
$\omay= T$ and $\omust=\{( s,\{( a, t)\})\mid( s, a, t)\in T\}$.  The
following proposition reformulates well-known facts about DMTS and
modal refinement.

\begin{proposition}
  \label{pr:dmts-bisim}
  $( \DMTS, \chi, \mathord{ \le})$ is a specification theory for
  $\LTS$ adequate for bisimilarity.
\end{proposition}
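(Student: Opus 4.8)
The plan is to verify the two required conditions for a specification theory (from the Definition), namely that $\chi(\mcal I)$ is a characteristic formula for each LTS $\mcal I$ and that $\mcal I\models\mcal S$ iff $\chi(\mcal I)\le\mcal S$, and then to show that the induced model equivalence $\boxeq$ coincides with bisimilarity. A clean route for the first part is to appeal to Lemma~\ref{le:specth}: it suffices to show that $\le$ restricted to the image $\chi(\LTS)$ is symmetric. So my first step is to unwind what a modal refinement $\chi(\mcal I_1)\le\chi(\mcal I_2)$ means when both DMTS come from LTS, and to argue this forces a bisimulation.

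The key observation is that under the embedding $\chi$, the may-transitions are exactly the original transitions $T$, and each must-transition is a \emph{singleton} disjunction $\{(a,t)\}$. I would therefore specialize the two clauses of modal refinement to this case. The may-clause for $\chi(\mcal I_1)\le\chi(\mcal I_2)$ gives: for every $s_1\tto{a}t_1$ in $T_1$ there is $s_2\tto{a}t_2$ in $T_2$ with $(t_1,t_2)\in R$ --- this is precisely the forward (simulation) condition of bisimilarity. The must-clause, specialized to singletons, reads: for every $s_2\tto{a}t_2$ in $T_2$ there is $s_1\tto{a}t_1$ in $T_1$ with $(t_1,t_2)\in R$ --- this is the backward condition. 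Hence a modal refinement between two embedded LTS is exactly a bisimulation, so $\chi(\mcal I_1)\le\chi(\mcal I_2)$ implies $\chi(\mcal I_1)$ and $\chi(\mcal I_2)$ are bisimilar, and by symmetry of bisimulation the converse refinement holds as well. This establishes symmetry of $\le$ on the image of $\chi$, and Lemma~\ref{le:specth} then yields that $(\DMTS,\chi,\mathord{\le})$ is a specification theory.

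It remains to identify the induced equivalence $\boxeq$ with bisimilarity and to verify the satisfaction clause. For adequacy I would use Prop.~\ref{pr:spec-basic}(2): $\mcal I_1\boxeq\mcal I_2$ iff $\chi(\mcal I_1)\le\chi(\mcal I_2)$, and by the computation just described the latter holds iff $\mcal I_1$ and $\mcal I_2$ are bisimilar. For the satisfaction relation, the convention is precisely that $\mcal I\models\mcal S$ is \emph{defined} as $\chi(\mcal I)\le\mcal S$ (equivalently, $\mcal I$ refines $\mcal S$ via the DMTS modal refinement); the second condition of the Definition is then automatic, and the first condition --- that $\chi(\mcal I)$ is characteristic --- is what the symmetry argument above delivered.

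The main obstacle is the careful specialization of the disjunctive must-clause to singleton disjunctions: one must check that the existential-over-$N_1$ followed by universal-over-$N_1$ quantifier pattern collapses correctly when every $N$ is a singleton $\{(a,t)\}$, so that the matching of a single pair $(a,t_1)\in N_1$ against some $(a,t_2)\in N_2$ reduces exactly to the backward bisimulation step with matching labels. Once this quantifier bookkeeping is done, both directions of the bisimilarity equivalence and the symmetry of $\le$ on $\chi(\LTS)$ follow by routine relation composition, and there is no genuine difficulty beyond this combinatorial unwinding.
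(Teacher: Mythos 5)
Your proposal is correct and follows essentially the same route as the paper's own proof: invoke Lemma~\ref{le:specth} by showing that on the image of $\chi$ the modal refinement clauses specialize (via singleton must-sets) to exactly the two halves of a bisimulation, hence $\le$ is symmetric there. The additional remarks on adequacy via Prop.~\ref{pr:spec-basic}(2) and on the quantifier bookkeeping for singleton disjunctions match what the paper does implicitly and explicitly, respectively.
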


\begin{proof}
  In lieu of Lemma~\ref{le:specth}, we show that $\le$ is
  bisimilarity, hence symmetric, on the image of $\chi$.  Let
  $\mcal I_1, \mcal I_2\in \LTS$ and assume
  $\chi( \mcal I_1)\le \chi( \mcal I_2)$.  Write
  $\mcal I_1=( S_1, s^0_1, T_1)$, $\mcal I_2=( S_2, s^0_2, T_2)$,
  $\chi( \mcal I_1)=( S_1,\{ s^0_1\}, \omay_1, \omust_1)$, and
  $\chi( \mcal I_2)=( S_2,\{ s^0_2\}, \omay_2, \omust_2)$.

  We have a relation $R\subseteq S_1\times S_2$ such that
  $( s_1^0, s_2^0)\in R$ and for all $( s_1, s_2)\in R$,
  $\forall s_1\may a_1 t_1: \exists s_2\may a_2 t_2:( t_1, t_2)\in R$
  and
  $\forall s_2\must{}_2 N_2: \exists s_1\must{}_1 N_1: \forall( a,
  t_1)\in N_1: \exists( a, t_2)\in N_2:( t_1, t_2)\in R$.
  Let $( s_1, s_2)\in R$.  We show that $R$ is a bisimulation.

  Let $( s_1, a, t_1)\in T_1$.  Then $s_1\may a_1 t_1$, so that we
  have a transition $s_2\may a_2 t_2$ with $( t_1, t_2)\in R$.  By
  definition of $\chi( \mcal I_1)$, $( s_2, a, t_2)\in T_2$.

  Let $( s_2, a, t_2)\in T_2$.  Then $s_2\must{}_2 N_2=\{( a, t_2)\}$,
  hence there is $s_1\must{}_1 N_1$ such that
  $\forall( a, t_1)\in N_1: \exists( a, t_2')\in N_2:( t_1, t_2')\in
  R$.
  But then $t_2'= t_2$, and by definition of $\chi( \mcal I_2)$,
  $N_1=\{( a, t_1)\}$ must be a one-element set, hence
  $( s_1, a, t_1)\in T_1$ and $( t_1, t_2)\in R$.

  We have shown that $\chi( \mcal I_1)\le \chi( \mcal I_2)$ implies
  that $\mcal I_1$ and $\mcal I_2$ are bisimilar; the proof of the
  other direction is similar.  \qed
\end{proof}

\subsection{Hennessy-Milner Logic with Maximal Fixed Points}

It is shown in~\cite{DBLP:conf/concur/BenesDFKL13,
  DBLP:conf/ictac/FahrenbergLT14} that there is a bijective
translation between DMTS and Hennessy-Milner logic with recursion and
maximal fixed points~\cite{DBLP:journals/tcs/Larsen90}.  For a finite
set $X$ of variables, let $\HML( X)$ be the set of formulae generated
as follows:
\begin{equation*}
  \HML( X)\ni \phi, \psi\Coloneqq \ltrue\mid \lfalse\mid \phi\land
  \psi\mid \phi\lor \psi\mid \langle a\rangle \phi\mid[ a] \phi \mid x
  \quad( a\in \Sigma, x\in X)
\end{equation*}

A \emph{recursive Hennessy-Milner
  formula}~\cite{DBLP:conf/concur/BenesDFKL13,
  DBLP:journals/tcs/Larsen90, DBLP:conf/ictac/FahrenbergLT14} is a
tuple $\mcal H=( X, X^0, \Delta)$ consisting of finite sets
$X\supseteq X^0$ of variables and initial variables and a
\emph{declaration} $\Delta: X\to \HML( X)$.  The set of such formulae
is denoted $\HMLR$.  The semantics of a formula $\mcal H\in \HMLR$ is
a set $\sem{ \mcal H}\in \LTS$ which is defined as a maximal fixed
point, see~\cite{DBLP:conf/concur/BenesDFKL13,
  DBLP:journals/tcs/Larsen90, books/AcetoILS07} for details.


In~\cite{DBLP:conf/concur/BenesDFKL13,
  DBLP:conf/ictac/FahrenbergLT14}, and extending results
of~\cite{DBLP:journals/tcs/BoudolL92, DBLP:conf/avmfss/Larsen89}, it
is shown that there is a bijective translation between DMTS and
recursive HML formulae.  That is, there are mappings
$\dh: \DMTS\to \HMLR$ and $\hd: \HMLR\to \DMTS$ such that
$\hd\circ \dh$ and $\dh\circ \hd$ are identities.

We can now define modal refinement of recursive HML formulae by
$\mcal H_1\le \mcal H_2$ iff
$\hd(\mcal H_1)\le \hd( \mcal H_2)$.  We also embed $\LTS$ into
$\HMLR$ by $\chi( \mcal I)= \dh( \chi_\DMTS( \mcal I))$, where
$\chi_\DMTS$ is the embedding $\LTS\to \DMTS$; this is the usual
characteristic-formula construction from, for example,
\cite{books/AcetoILS07}.

\begin{proposition}[\cite{DBLP:conf/concur/BenesDFKL13,
    DBLP:conf/ictac/FahrenbergLT14}]
  $( \HMLR, \chi, \mathord{ \le})$ is a specification theory for
  $\LTS$ adequate for bisimilarity. \qed
\end{proposition}

\section{A Specification Theory for Simulation Equivalence}
\label{se:simeq-spec}

We want to construct specification theories for other interesting
relations in the linear-time--branching-time
spectrum~\cite{inbook/hpa/Glabbeek01}.  Given
Proposition~\ref{le:expr->symm} and the fact that specification theories are
expressive, we know that it is futile to look for specification
theories for \emph{preorders} in the spectrum.  What we \emph{can} do,
however, is find specification theories for the \emph{equivalences} in
the spectrum.  To warm up, we start out by a specification theory for
simulation equivalence.

Recall~\cite{DBLP:journals/tcs/Larsen87} that a \emph{simulation} of
LTS $( S_1, s^0_1, T_1)$, $( S_2, s^0_2, T_2)$ is a relation
$R\subseteq S_1\times S_2$ such that $( s^0_1, s^0_2)\in R$ and for
all $( s_1, s_2)\in R$,
\begin{itemize}
\item for all $( s_1, a, t_1)\in T_1$, there is $( s_2, a, t_2)\in
  T_2$ with $( t_1, t_2)\in R$.
\end{itemize}
LTS $( S_1, s^0_1, T_1)$ and $( S_2, s^0_2, T_2)$ are said to be
\emph{simulation equivalent} if there exist a simulation
$R^1\subseteq S_1\times S_2$ \emph{and} a simulation
$R^2\subseteq S_2\times S_1$.

\begin{definition}
  \label{de:lesim}
  Let $\mcal D_1=( S_1, S^0_1, \omay_1, \omust_1)$,
  $\mcal D_2=( S_2, S^0_2, \omay_2, \omust_2)$ be DMTS.  A
  \emph{simulation refinement} consists of two relations
  $R_1, R_2\subseteq S_1\times S_2$ such that
  \begin{enumerate}
  \item \label{en:lesim.init}
    $\forall s^0_1\in S^0_1: \exists s^0_2\in S^0_2:( s^0_1, s^0_2)\in
    R_1$
    and
    $\forall s^0_2\in S^0_2: \exists s^0_1\in S^0_1:( s^0_1, s^0_2)\in
    R_2$;
  \item \label{en:lesim.left}
    $\forall( s_1, s_2)\in R_1: \forall s_1\may a_1 t_1: \exists
    s_2\may a_2 t_2:( t_1, t_2)\in R_1$;
  \item \label{en:lesim.right}
    $\forall( s_1, s_2)\in R_2: \forall s_2\must{}_2 N_2: \exists
    s_1\must{}_1 N_1: \forall( a, t_1)\in N_1: \exists( a, t_2)\in
    N_2:( t_1, t_2)\in R_2$.
  \end{enumerate}
\end{definition}

Intuitively, $R_1$ is a simulation of may-transitions from $\mcal D_1$
to $\mcal D_2$, whereas $R_2$ is a simulation of disjunctive
must-transitions from $\mcal D_2$ to $\mcal D_1$.  Let
$\mathord{ \lesim}\subseteq \DMTS\times \DMTS$ be the relation defined
by $\mcal D_1\lesim \mcal D_2$ iff there exists a simulation
refinement as above.  Clearly, $\lesim$ is a preorder.  A direct proof
of the following theorem, similar to the one of
Proposition~\ref{pr:dmts-bisim}, is shown below, but it also follows
from the later
Theorem~\ref{th:spec-game}.

\begin{theorem}
  \label{th:simeq-spec}
  $( \DMTS, \chi, \mathord{ \lesim})$ forms a specification theory for
  $\LTS$ adequate for simulation equivalence.
\end{theorem}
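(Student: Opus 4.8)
The plan is to follow exactly the strategy of the proof of Prop.~\ref{pr:dmts-bisim} and reduce everything to a single biconditional:
\begin{equation*}
  \chi( \mcal I_1)\lesim \chi( \mcal I_2)\quad\text{iff}\quad
  \mcal I_1\text{ and }\mcal I_2\text{ are simulation equivalent.}
\end{equation*}
Once this is established it does double duty. First, since simulation equivalence is a symmetric relation, the biconditional shows that $\lesim$ is symmetric on the image of $\chi$, so Lemma~\ref{le:specth} immediately yields that $( \DMTS, \chi, \mathord{ \lesim})$ is a specification theory for $\LTS$. Second, having a specification theory, Prop.~\ref{pr:spec-basic} tells us that the induced model equivalence $\boxeq$ agrees with $\lesim$ on characteristic formulae, \ie~$\mcal I_1\boxeq \mcal I_2$ iff $\chi( \mcal I_1)\lesim \chi( \mcal I_2)$; combined with the biconditional this identifies $\boxeq$ with simulation equivalence, which is precisely adequacy. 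So the entire theorem rests on the biconditional.

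To prove the biconditional I would first unfold the definition of $\chi$: each transition $( s, a, t)$ of an LTS becomes both a may-transition $s\may a t$ and a \emph{singleton} disjunctive must-transition $s\must{}\{( a, t)\}$, and there is exactly one initial state. For the forward direction, suppose $( R_1, R_2)$ is a simulation refinement. Because $\omay_i= T_i$, condition~\ref{en:lesim.left} of Def.~\ref{de:lesim} says verbatim that $R_1$ is a simulation from $\mcal I_1$ to $\mcal I_2$, with the initial pair supplied by the first half of condition~\ref{en:lesim.init}. For condition~\ref{en:lesim.right} the decisive simplification is that all must-transitions of characteristic formulae are singletons: taking $N_2=\{( a, t_2)\}$ for some $( s_2, a, t_2)\in T_2$ forces the witnessing $N_1=\{( a, t_1)\}$ to be a singleton with $( s_1, a, t_1)\in T_1$ and $( t_1, t_2)\in R_2$. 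Hence $R_2^{-1}$ is a simulation from $\mcal I_2$ to $\mcal I_1$, with its initial pair coming from the second half of condition~\ref{en:lesim.init}. Existence of both simulations is exactly simulation equivalence.

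The converse is the same computation run backwards: given simulations $R^1\subseteq S_1\times S_2$ of $\mcal I_1$ by $\mcal I_2$ and $R^2\subseteq S_2\times S_1$ of $\mcal I_2$ by $\mcal I_1$, I would set $R_1= R^1$ and $R_2=( R^2)^{-1}$ and verify conditions~\ref{en:lesim.init}--\ref{en:lesim.right}, the only nontrivial one again being condition~\ref{en:lesim.right}, where the singleton $N_1=\{( a, t_1)\}$ produced by $R^2$ discharges the disjunctive requirement. The main obstacle---indeed the only conceptual point---is condition~\ref{en:lesim.right}: one must recognize that the disjunctive must-transition machinery of Def.~\ref{de:lesim} degenerates, on characteristic formulae, to an ordinary simulation, but one running in the \emph{opposite} direction to $R_1$. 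This direction reversal (the passage to $R_2^{-1}$) is precisely what converts the asymmetric pair $( R_1, R_2)$ into the symmetric pair of simulations that defines simulation equivalence, and keeping the orientation straight is where care is needed.
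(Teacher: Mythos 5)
Your proof is correct and follows essentially the same route as the paper's: both establish that $\lesim$ coincides with simulation equivalence on the image of $\chi$ (exploiting the singleton must-transitions of characteristic formulae and the direction reversal $R_2^{-1}$) and then invoke Lemma~\ref{le:specth}. The only difference is that you spell out the adequacy step explicitly via Prop.~\ref{pr:spec-basic}, which the paper's appendix proof leaves implicit, and that you state the converse direction in full where the paper omits it as routine.
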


\begin{proof}
We show that $\lesim$ is simulation equivalence, hence symmetric, on
the image of $\chi$ and apply Lemma~\ref{le:specth}.  Let
$\mcal I_1, \mcal I_2\in \LTS$ and assume
$\chi( \mcal I_1)\lesim \chi( \mcal I_2)$.  Write
$\mcal I_1=( S_1, s^0_1, T_1)$, $\mcal I_2=( S_2, s^0_2, T_2)$,
$\chi( \mcal I_1)=( S_1,\{ s^0_1\}, \omay_1, \omust_1)$, and
$\chi( \mcal I_2)=( S_2,\{ s^0_2\}, \omay_2, \omust_2)$.

Let $R_1, R_2\subseteq S_1\times S_2$ be relations as of
Definition~\ref{de:lesim}.  Then $( s^0_1, s^0_2)\in R_1$ and
$( s^0_1, s^0_2)\in R_2$.  We show that $R_1\subseteq S_1\times S_2$
and $R_2^{ -1}\subseteq S_2\times S_1$ are simulations.

Let $( s_1, s_2)\in R_1$ and $( s_1, a, t_1)\in T_1$.  Then
$s_1\may a_1 t_1$, hence there is $s_2\may a_2 t_2$ such that
$( t_1, t_2)\in R_1$.  But then also $( s_2, a, t_2)\in T_2$.

Let $( s_2, s_1)\in R_2^{ -1}$ and $( s_2, a, t_2)\in T_2$.  Then
$s_2\must{}_2 N_2=\{( a, s_2)\}$, hence there is $s_1\must{}_1 N_1$
such that
$\forall( a, t_1)\in N_1: \exists( a, t_2')\in N_2:( t_1, t_2')\in
R_2$.
But then $t_2'= t_2$ and $N_1=\{( a, t_1)\}$, hence
$( s_1, a, t_1)\in T_1$ and $( t_2, t_1)\in R_2^{ -1}$.  \qed
\end{proof}

\section{Specification Theories for Branching Equivalences}
\label{se:bspec}

We proceed to generalize the work in the preceding section and develop
DMTS-based specification theories for all \emph{branching}
equivalences in the linear-time--branching-time spectrum in
Figure~\ref{fi:spectrum}.  Examples of such branching equivalences
include the bisimilarity and simulation equivalence which we have
already seen, but also ready simulation
equivalence~\cite{DBLP:conf/popl/LarsenS89} and nested simulation
equivalence~\cite{DBLP:journals/iandc/GrooteV92,
  DBLP:journals/iandc/AcetoFGI04} are important.  We will treat the
linear part of the spectrum, which includes relations such as trace
equivalence~\cite{DBLP:journals/cacm/Hoare78}, impossible-futures
equivalence~\cite{DBLP:books/sp/Vogler92} or failure
equivalence~\cite{DBLP:conf/acsd/BujtorSV15,
  DBLP:journals/tecs/BujtorV15, DBLP:journals/jacm/BrookesHR84,
  DBLP:journals/acta/Vogler89, DBLP:conf/icalp/Pnueli85}, in the next
section.

We start by laying out a scheme which systematically covers all
branching relations in the spectrum.

\begin{definition}
  \label{de:bkswitch-lts}
  Let $k\ge 0$ and
  $\mcal I_1=( S_1, s^0_1, T_1), \mcal I_2=( S_2, s^0_2, T_2)\in
  \LTS$.
  A \emph{branching $k$-switching relation family} from $\mcal I_1$ to
  $\mcal I_2$ consists of relations
  $R^0\!,\dotsc, R^k\subseteq S_1\times S_2$ such that
  $( s_1^0, s_2^0)\in R^0$ and
  \begin{itemize}
  \item for all \emph{even} $j\in\{ 0,\dotsc, k\}$ and
    $( s_1, s_2)\in R^j$:
    \begin{itemize}
    \item
      $\forall( s_1, a, t_1)\in T_1: \exists( s_2, a, t_2)\in T_2:(
      t_1, t_2)\in R^j$;
    \item if $j< k$, then
      $\forall( s_2, a, t_2)\in T_2: \exists( s_1, a, t_1)\in T_1:(
      t_1, t_2)\in R^{ j+ 1}$;
    \end{itemize}
  \item for all \emph{odd} $j\in\{ 0,\dotsc, k\}$ and
    $( s_1, s_2)\in R^j$:
    \begin{itemize}
    \item
      $\forall( s_2, a, t_2)\in T_2: \exists( s_1, a, t_1)\in T_1:(
      t_1, t_2)\in R^j$;
    \item if $j< k$, then
      $\forall( s_1, a, t_1)\in T_1: \exists( s_2, a, t_2)\in T_2:(
      t_1, t_2)\in R^{ j+ 1}$.
    \end{itemize}
  \end{itemize}
\end{definition}

Clearly, a simulation is the same as a branching $0$-switching
relation family.  Also, a branching $1$-switching relation family is a
\emph{nested simulation}: the initial states are related in $R^0$; any
transition in $\mcal I_1$ from a pair $( s_1, s_2)\in R^0$ has to be
matched recursively in $\mcal I_2$; and at any point in time, the
sense of the matching can switch, in that now transitions in
$\mcal I_2$ from a pair $( s_1, s_2)\in R^1$ have to be matched
recursively by transitions in $\mcal I_1$.  In general, a branching
$k$-switching relation family is a $k$-nested simulation, see
also~\cite[Definition~8.5.2]{DBLP:journals/iandc/GrooteV92} which is
similar to ours.  A branching $\infty$-switching relation family is a
bisimulation: any transition in $\mcal I_1$ has to be matched
recursively by one in $\mcal I_2$ and vice versa.  We refer
to~\cite{DBLP:journals/tcs/FahrenbergL14} for more motivation.

\begin{definition}
  Let $k\ge 0$ and
  $\mcal I_1=( S_1, s^0_1, T_1), \mcal I_2=( S_2, s^0_2, T_2)\in
  \LTS$.
  A \emph{branching $k$-ready relation family} from $\mcal I_1$ to
  $\mcal I_2$ is a branching $k$-switching relation family
  $R^0\!,\dotsc, R^k\subseteq S_1\times S_2$ with the extra property
  that for all $( s_1, s_2)\in R^k$:
  \begin{itemize}
  \item if $k$ is even, then
    $\forall( s_2, a, t_2)\in T_2: \exists( s_1, a, t_1)\in T_1$;
  \item if $k$ is odd, then
    $\forall( s_1, a, t_1)\in T_1: \exists( s_2, a, t_2)\in T_2$.
  \end{itemize}
\end{definition}

Hence a branching $0$-ready relation family is the same as a
\emph{ready simulation}: any transition in $\mcal I_1$ has to be
matched recursively by one in $\mcal I_2$; and at any point in time,
precisely the same actions have to be available in the two states.  A
branching $1$-ready relation family would be a nested ready
simulation, and so on.  Branching $k$-switching and $k$-ready relation
families cover all branching relations in the
linear-time--branching-time spectrum.

Because of Proposition~\ref{le:expr->symm}, we are only interested in
equivalences.  For $k\ge 0$ and $\mcal I_1, \mcal I_2\in \LTS$, we
write $\mcal I_1\sim_k \mcal I_2$ if there exist a branching
$k$-switching relation family from $\mcal I_1$ to $\mcal I_2$ and
another from $\mcal I_2$ to $\mcal I_1$.  We write
$\mcal I_1\sim_k^{ \textup r} \mcal I_2$ if there exist a branching
$k$-ready relation family from $\mcal I_1$ to $\mcal I_2$ and another
from $\mcal I_2$ to $\mcal I_1$.  Then $\sim_0$ is simulation
equivalence, $\sim_1$ is nested simulation equivalence, $\sim_\infty$
is bisimilarity, $\sim_0^{ \textup r}$ is ready simulation
equivalence, etc.

We proceed to devise specification theories for $\LTS$ which are
adequate for~$\sim_k$ and~$\sim_k^{ \textup r}$.

\begin{definition}
  \label{de:bkswitch-dmts}
  Let $k\ge 0$,
  $\mcal D_1=( S_1, S^0_1, \omay_1, \omust_1), \mcal D_2=( S_2, S^0_2,
  \omay_2, \omust_2)\in \DMTS$.
  A \emph{branching $k$-switching relation family} from $\mcal D_1$ to
  $\mcal D_2$ consists of relations
  $R_1^0,\dotsc, R_1^k, R_2^0,\dotsc, R_2^k\subseteq S_1\times S_2$
  such that
  \begin{itemize}
  \item
    $\forall s^0_1\in S^0_1: \exists s^0_2\in S^0_2:( s^0_1, s^0_2)\in
    R_1^0$
    and
    $\forall s^0_2\in S^0_2: \exists s^0_1\in S^0_1:( s^0_1, s^0_2)\in
    R_2^0$;
  \item for all \emph{even} $j\in\{ 0,\dotsc, k\}$ and
    $( s_1, s_2)\in R_1^j$:
    \begin{itemize}
    \item
      $\forall s_1\may a_1 t_1: \exists s_2\may a_2 t_2:( t_1, t_2)\in
      R_1^j$;
    \item if $j< k$, then
      $\forall s_2\must{}_2 N_2: \exists s_1\must{}_1 N_1: \forall( a,
      t_1)\in N_1: \exists( a, t_2)\in N_2:( t_1, t_2)\in R_1^{ j+
        1}$;
    \end{itemize}
  \item for all \emph{odd} $j\in\{ 0,\dotsc, k\}$ and
    $( s_1, s_2)\in R_1^j$:
    \begin{itemize}
    \item
      $\forall s_2\must{}_2 N_2: \exists s_1\must{}_1 N_1: \forall( a,
      t_1)\in N_1: \exists( a, t_2)\in N_2:( t_1, t_2)\in R_1^j$;
    \item if $j< k$, then
      $\forall s_1\may a_1 t_1: \exists s_2\may a_2 t_2:( t_1, t_2)\in
      R_1^{ j+ 1}$;
    \end{itemize}
  \item for all \emph{even} $j\in\{ 0,\dotsc, k\}$ and
    $( s_1, s_2)\in R_2^j$:
    \begin{itemize}
    \item
      $\forall s_2\must{}_2 N_2: \exists s_1\must{}_1 N_1: \forall( a,
      t_1)\in N_1: \exists( a, t_2)\in N_2:( t_1, t_2)\in R_2^j$;
    \item if $j< k$, then
      $\forall s_1\may a_1 t_1: \exists s_2\may a_2 t_2:( t_1, t_2)\in
      R_2^{ j+ 1}$.
    \end{itemize}
  \item for all \emph{odd} $j\in\{ 0,\dotsc, k\}$ and
    $( s_1, s_2)\in R_2^j$:
    \begin{itemize}
    \item
      $\forall s_1\may a_1 t_1: \exists s_2\may a_2 t_2:( t_1, t_2)\in
      R_2^j$;
    \item if $j< k$, then
      $\forall s_2\must{}_2 N_2: \exists s_1\must{}_1 N_1: \forall( a,
      t_1)\in N_1: \exists( a, t_2)\in N_2:( t_1, t_2)\in R_2^{ j+
        1}$;
    \end{itemize}
  \end{itemize}
  A \emph{branching $k$-ready relation family} from $\mcal D_1$ to
  $\mcal D_2$ is a branching $k$-switching relation family as above
  with the extra property that if $k$ is even, then
  \begin{itemize}
  \item
    $\forall( s_1, s_2)\in R_1^k: \forall s_2\must{}_2 N_2: \exists
    s_1\must{}_1 N_1: \forall( a, t_1)\in N_1: \exists( a, t_2)\in
    N_2$;
  \item
    $\forall( s_1, s_2)\in R_2^k: \forall s_1\may a_1 t_1: \exists
    s_2\may a_2 t_2$;
  \end{itemize}
  and if $k$ is odd, then
  \begin{itemize}
  \item
    $\forall( s_1, s_2)\in R_1^k: \forall s_1\may a_1 t_1: \exists
    s_2\may a_2 t_2$;
  \item
    $\forall( s_1, s_2)\in R_2^k: \forall s_2\must{}_2 N_2: \exists
    s_1\must{}_1 N_1: \forall( a, t_1)\in N_1: \exists( a, t_2)\in
    N_2$.
  \end{itemize}
\end{definition}

For $k\ge 0$ and $\mcal D_1, \mcal D_2\in \DMTS$, we write
$\mcal D_1\le_k \mcal D_2$ if there exist a branching $k$-switching
relation family from $\mcal D_1$ to $\mcal D_2$.  We write
$\mcal D_1\le_k^{ \textup r} \mcal D_2$ if there exist a branching
$k$-ready relation family from $\mcal D_1$ to $\mcal D_2$.  Note that
$\le_0$ is the relation $\lesim$ from the preceding section.

\begin{theorem}
  \label{th:spec-game}
  For any $k\ge 0$, $( \DMTS, \chi, \le_k)$ is a specification theory
  for $\LTS$ adequate for $\sim_k$, and
  $( \DMTS, \chi, \le_k^{ \textup r})$ is a specification theory for
  $\LTS$ adequate for $\sim_k^{ \textup r}$.
\end{theorem}

\begin{proof}
Let $k\ge 0$.  We show that $( \DMTS, \chi, \le_k)$ is a specification
theory for $\LTS$ adequate for $\sim_k$; the proof for
$\le_k^{ \textup r}$ is similar.  We will apply Lemma~\ref{le:specth}.
Let
$\mcal I_1=( S_1, s^0_1, T_1), \mcal I_2=( S_2, s^0_2, T_2)\in \LTS$
and write $\chi( \mcal I_1)=( S_1,\{ s^0_1\}, \omay_1, \omust_1)$ and
$\chi( \mcal I_2)=( S_2,\{ s^0_2\}, \omay_2, \omust_2)$; we must prove
that $\chi( \mcal I_1)\le_k \chi( \mcal I_2)$ iff
$\mcal I_1\sim_k \mcal I_2$.

Assume that $\chi( \mcal I_1)\le_k \chi( \mcal I_2)$ and let
$R_1^0,\dotsc, R_1^k, R_2^0,\dotsc, R_2^k\subseteq S_1\times S_2$ be a
DMTS-branching $k$-switching relation family from $\chi( \mcal I_1)$
to $\chi( \mcal I_2)$ as of Definition~\ref{de:bkswitch-dmts}.  We show that
$R_1^0,\dotsc, R_1^k$ is an LTS-branching $k$-switching relation
family from $\mcal I_1$ to $\mcal I_2$ as of
Definition~\ref{de:bkswitch-lts}.
First, we have $( s_1^0, s_2^0)\in R_1^0$.

Let $j\in\{ 0,\dotsc, k\}$ even and $( s_1, s_2)\in R_1^j$.  Let
$( s_1, a, t_1)\in T_1$, then $s_1\may a_1 t_1$, hence there is
$s_2\may a_2 t_2$ such that $( t_1, t_2)\in R_1^j$, but then also
$( s_2, a, t_2)\in T_2$.  If $j< k$, then let $( s_2, a, t_2)\in T_2$,
thus $s_2\must{}_2 N_2=\{( a, t_2)\}$.  Hence there is
$s_1\must{}_1 N_1$ such that
$\forall( a, t_1)\in N_1: \exists( a, t_2')\in N_2:( t_1, t_2')\in
R_1^{ j+ 1}$.
But then $t_2'= t_2$ and $N_1=\{( a, t_1)\}$, hence
$( s_1, a, t_1)\in T_1$.  The arguments for $j$ odd are similar.


We have shown that $R_1^0,\dotsc, R_1^k$ is an LTS-branching
$k$-switching relation family from $\mcal I_1$ to $\mcal I_2$.
Analogously, one can show that $R_2^0,\dotsc, R_2^k$ is an
LTS-branching $k$-switching relation family from $\mcal I_2$ to
$\mcal I_1$.  The proof that $\mcal I_1\sim_k \mcal I_2$ implies
$\chi( \mcal I_1)\le_k \chi( \mcal I_2)$ proceeds along similar
lines. \qed
\end{proof}

\begin{remark}
  \label{re:simgame}
  There is a setting of \emph{generalized simulation games}, based on
  Stirling's bisimulation games~\cite{DBLP:conf/banff/Stirling95},
  which generalizes the above constructions and gives them a natural
  context.  We have developed these in a quantitative setting
  in~\cite{DBLP:journals/tcs/FahrenbergL14}, and we provide an
  exposition of the approach in Section~\ref{se:simgame}.  Generalized
  simulation games can be lifted to games on DMTS which can be used to
  define the relations of Definition~\ref{de:bkswitch-dmts}, see
  Section~\ref{se:spec-games}.
\end{remark}

\section{Specification Theories for Linear Equivalences}

We develop a scheme similar to the one of the previous section to
cover all linear relations in the linear-time--branching-time
spectrum.  For $\mcal I=( S, s^0, T)\in \LTS$, we let
$T^*\subseteq S\times \Sigma^*\times S$ be the reflexive, transitive
closure of $T$; a recursive definition is as follows:
\begin{itemize}
\item $( s, \epsilon, s)\in T^*$ for all $s\in S$;
\item for all $( s, \tau, t)\in T^*$ and $( t, a, u)\in T$, also $( s,
  \tau. a, u)\in T^*$.
\end{itemize}

\begin{definition}
  \label{de:lkswitch-lts}
  Let $k\ge 0$ and
  $\mcal I_1=( S_1, s^0_1, T_1), \mcal I_2=( S_2, s^0_2, T_2)\in
  \LTS$.
  A \emph{linear $k$-switching relation family} from $\mcal I_1$ to
  $\mcal I_2$ consists of relations
  $R^0\!,\dotsc, R^k\subseteq S_1\times S_2$ such that
  $( s_1^0, s_2^0)\in R^0$ and
  \begin{itemize}
  \item for all \emph{even} $j\in\{ 0,\dotsc, k\}$ and
    $( s_1, s_2)\in R^j$:
    \begin{itemize}
    \item
      $\forall( s_1, \tau, t_1)\in T_1^*: \exists( s_2, \tau, t_2)\in
      T_2^*$;
    \item if $j< k$, then
      $\forall( s_1, \tau, t_1)\in T_1^*: \exists( s_2, \tau, t_2)\in
      T_2^*:( t_1, t_2)\in R^{ j+ 1}$;
    \end{itemize}
  \item for all \emph{odd} $j\in\{ 0,\dotsc, k\}$ and
    $( s_1, s_2)\in R^j$:
    \begin{itemize}
    \item
      $\forall( s_2, \tau, t_2)\in T_2^*: \exists( s_1, \tau, t_1)\in
      T_1^*$;
    \item if $j< k$, then
      $\forall( s_2, \tau, t_2)\in T_2^*: \exists( s_1, \tau, t_1)\in
      T_1^*:( t_1, t_2)\in R^{ j+ 1}$;
    \end{itemize}
  \end{itemize}
\end{definition}

Hence a linear $0$-switching relation family is a \emph{trace
  inclusion}, and a linear $1$-switching relation family is a
\emph{impossible-futures inclusion}: any trace in $\mcal I_1$ has to
be matched by a trace in $\mcal I_2$, and then any trace from the end
of the second trace has to be matched by one from the end of the first
trace.

\begin{definition}
  \label{de:lkready-lts}
  Let $k\ge 0$ and
  $\mcal I_1=( S_1, s^0_1, T_1), \mcal I_2=( S_2, s^0_2, T_2)\in
  \LTS$.
  A \emph{linear $k$-ready relation family} from $\mcal I_1$ to
  $\mcal I_2$ is a linear $k$-switching relation family
  $R^0\!,\dotsc, R^k\subseteq S_1\times S_2$ with the extra property
  that for all $( s_1, s_2)\in R^k$:
  \begin{itemize}
  \item if $k$ is even, then
    $\forall( s_1, \tau, t_1)\in T_1^*: \exists( s_2, \tau, t_2)\in
    T_2^*: \forall( t_2, a, u_2)\in T_2: \exists( t_1, a,
    u_1)\in T_1$;
  \item if $k$ is odd, then
    $\forall( s_2, \tau, t_2)\in T_2^*: \exists( s_1, \tau, t_1)\in
    T_1^*: \forall( t_1, a, u_1)\in T_1: \exists( t_2, a, u_2)\in
    T_2$.
  \end{itemize}
\end{definition}

Thus a linear $0$-ready relation family is a \emph{failure inclusion}:
any trace in $\mcal I_1$ has to be matched by a trace in $\mcal I_2$
such that there is an inclusion of \emph{failure sets} of
non-available actions.  For $k\ge 0$ and
$\mcal I_1, \mcal I_2\in \LTS$, we write
$\mcal I_1\approx_k \mcal I_2$ if there exist a branching
$k$-switching relation family from $\mcal I_1$ to $\mcal I_2$ and
another from $\mcal I_2$ to $\mcal I_1$.  We write
$\mcal I_1\approx_k^{ \textup r} \mcal I_2$ if there exist a branching
$k$-ready relation family from $\mcal I_1$ to $\mcal I_2$ and another
from $\mcal I_2$ to $\mcal I_1$.

For $\mcal D=( S, S^0, \omay, \omust)\in \DMTS$, we define
$\mathord{ \smay{}}, \mathord{ \smust{}}\subseteq S\times
\Sigma^*\times S$ recursively as follows:
\begin{itemize}
\item $s\smay \epsilon s$ and $s\smust \epsilon s$ for all $s\in S$;
\item for all $s\smay \tau t$ and $t\may a u$, also
  $s\smay{ \tau. a} u$;
\item for all $s\smust \tau t$, $t\must{} N$, and $( a, u)\in N$, also
  $s\smust{ \tau. a} u$.
\end{itemize}

\begin{definition}
  \label{de:lkswitch-dmts}
  Let $k\ge 0$,
  $\mcal D_1=( S_1, S^0_1, \omay_1, \omust_1), \mcal D_2=( S_2, S^0_2,
  \omay_2, \omust_2)\in \DMTS$.  A \emph{linear $k$-switching relation
    family} from $\mcal D_1$ to $\mcal D_2$ consists of relations
  $R_1^0,\dotsc, R_1^k, R_2^0,\dotsc, R_2^k\subseteq S_1\times S_2$
  such that
  \begin{itemize}
  \item
    $\forall s^0_1\in S^0_1: \exists s^0_2\in S^0_2:( s^0_1, s^0_2)\in
    R_1^0$
    and
    $\forall s^0_2\in S^0_2: \exists s^0_1\in S^0_1:( s^0_1, s^0_2)\in
    R_2^0$;
  \item for all \emph{even} $j\in\{ 0,\dotsc, k\}$ and
    $( s_1, s_2)\in R_1^j$:
    \begin{itemize}
    \item
      $\forall s_1\smay \tau_1 t_1: \exists s_2\smay \tau_2 t_2$;
    \item if $j< k$, then
      $\forall s_1\smay \tau_1 t_1: \exists s_2\smay \tau_2 t_2:( t_1,
      t_2)\in R_1^{ j+ 1}$;
    \end{itemize}
  \item for all \emph{odd} $j\in\{ 0,\dotsc, k\}$ and
    $( s_1, s_2)\in R_1^j$:
    \begin{itemize}
    \item
      $\forall s_2\smust \tau_2 t_2: \exists s_1\smust \tau_1 t_1$;
    \item if $j< k$, then
      $\forall s_2\smust \tau_2 t_2: \exists s_1\smust \tau_1 t_1:(
      t_1, t_2)\in R_1^{ j+ 1}$;
    \end{itemize}
  \item for all \emph{even} $j\in\{ 0,\dotsc, k\}$ and
    $( s_1, s_2)\in R_2^j$:
    \begin{itemize}
    \item
      $\forall s_2\smust \tau_2 t_2: \exists s_1\smust \tau_1 t_1$;
    \item if $j< k$, then
      $\forall s_2\smust \tau_2 t_2: \exists s_1\smust \tau_1 t_1:(
      t_1, t_2)\in R_1^{ j+ 1}$;
    \end{itemize}
  \item for all \emph{odd} $j\in\{ 0,\dotsc, k\}$ and
    $( s_1, s_2)\in R_2^j$:
    \begin{itemize}
    \item
      $\forall s_1\smay \tau_1 t_1: \exists s_2\smay \tau_2 t_2$;
    \item if $j< k$, then
      $\forall s_1\smay \tau_1 t_1: \exists s_2\smay \tau_2 t_2:( t_1,
      t_2)\in R_2^{ j+ 1}$.
    \end{itemize}
  \end{itemize}
  A \emph{linear $k$-ready relation family} from $\mcal D_1$ to
  $\mcal D_2$ is a linear $k$-switching relation family as above with
  the extra property that if $k$ is even, then
  \begin{itemize}
  \item
    $\forall( s_1, s_2)\in R_1^k: \forall s_1\smay \tau_1 t_1: \exists
    s_2\smay \tau_2 t_2: \forall t_2\must{}_2 N_2: \exists t_1\must{}_1
    N_1: \forall( a, u_1)\in N_1: \exists( a, u_2)\in N_2$;
  \item
    $\forall( s_1, s_2)\in R_2^k: \forall s_2\smust \tau_2
    t_2: \exists s_1\smust \tau_1 t_1: \forall t_1\may a_1
    u_1: \exists t_2\may a_2 u_2$;
  \end{itemize}
  and if $k$ is odd, then
  \begin{itemize}
  \item
    $\forall( s_1, s_2)\in R_1^k: \forall s_2\smust \tau_2 t_2:
    \exists s_1\smust \tau_1 t_1: \forall t_1\may a_1 u_1: \exists
    t_2\may a_2 u_2$;
  \item
    $\forall( s_1, s_2)\in R_2^k: \forall s_1\smay \tau_1 t_1: \exists
    s_2\smay \tau_2 t_2: \forall t_2\must{}_2 N_2: \exists t_1\must{}_1
    N_1: \forall( a, u_1)\in N_1: \exists( a, u_2)\in N_2$;
  \end{itemize}
\end{definition}

For $k\ge 0$ and $\mcal D_1, \mcal D_2\in \DMTS$, we write
$\mcal D_1\lle_k \mcal D_2$ if there exists a linear $k$-switching
relation family from $\mcal D_1$ to $\mcal D_2$ and
$\mcal D_1\lle_k^{ \textup r} \mcal D_2$ if there exists a linear
$k$-ready relation family from $\mcal D_1$ to $\mcal D_2$.

\begin{theorem}
  \label{th:spec-lin}
  For any $k\ge 0$, $( \DMTS, \chi, \lle_k)$ is a specification theory
  for $\LTS$ adequate for $\approx_k$, and
  $( \DMTS, \chi, \lle_k^{ \textup r})$ is a specification theory for
  $\LTS$ adequate for $\approx_k^{ \textup r}$.
\end{theorem}

\begin{proof}
  Let $k\ge 0$.  We first show that $( \DMTS, \chi, \lle_k)$ is a
  specification theory for $\LTS$ adequate for $\approx_k$.  We will
  apply Lemma~\ref{le:specth}.

  Let
  $\mcal I_1=( S_1, s^0_1, T_1), \mcal I_2=( S_2, s^0_2, T_2)\in \LTS$
  and denote $\chi( \mcal I_1)=( S_1,\{ s^0_1\},$ $\omay_1, \omust_1)$
  and $\chi( \mcal I_2)=( S_2,\{ s^0_2\}, \omay_2, \omust_2)$.  We
  show that $\chi( \mcal I_1)\lle_k \chi( \mcal I_2)$ implies
  $\mcal I_1\approx_k \mcal I_2$; the other direction is similar.

Assume that $\chi( \mcal I_1)\lle_k \chi( \mcal I_2)$ and let
$R_1^0,\dotsc, R_1^k, R_2^0,\dotsc, R_2^k\subseteq S_1\times S_2$ be a
DMTS-linear $k$-switching relation family from $\chi( \mcal I_1)$ to
$\chi( \mcal I_2)$ as of Definition~\ref{de:lkswitch-dmts}.  We show that
$R_1^0,\dotsc, R_1^k$ is an LTS-linear $k$-switching relation family
from $\mcal I_1$ to $\mcal I_2$ as of Definition~\ref{de:lkswitch-lts}.
First, we have $( s_1^0, s_2^0)\in R_1^0$.

Let $j\in\{ 0,\dotsc, k\}$ even and $( s_1, s_2)\in R_1^j$.  Let
$( s_1, \tau, t_1)\in T_1^*$, then $s_1\smay \tau_1 t_1$, hence there
is $s_2\smay \tau_2 t_2$, implying that $( s_2, \tau, t_2)\in T_2^*$.
If $j< k$, then there is also $s_2\smay \tau_2 t_2$ such that
$( t_1, t_2)\in R_1^{ j+ 1}$, and again $( s_2, \tau, t_2)\in T_2^*$.

Let $j\in\{ 0,\dotsc, k\}$ odd and $( s_1, s_2)\in R_1^j$.  Let
$( s_2, \tau, t_2)\in T_2^*$, then $s_2\smust \tau_2 t_2$.  Hence
there is $s_1\smust \tau_1 t_1$, \ie~$( s_1, \tau, t_1)\in T_1^*$.  If
$j< k$, then there is $s_1\smust \tau_1 t_1$,
\ie~$( s_1, \tau, t_1)\in T_1^*$, such that
$( t_1, t_2)\in R_1^{ j+ 1}$.

We have shown that $R_1^0,\dotsc, R_1^k$ is an LTS-linear
$k$-switching relation family from $\mcal I_1$ to $\mcal I_2$.
Similarly, one can show that $R_2^0,\dotsc, R_2^k$ is an LTS-linear
$k$-switching relation family from $\mcal I_2$ to $\mcal I_1$.

Now assume that
$\chi( \mcal I_1)\lle_k^{ \textup r} \chi( \mcal I_2)$; we show that
$\mcal I_1\approx_k^{ \textup r} \mcal I_2$ (the other direction is
again similar).  Let
$R_1^0,\dotsc, R_1^k, R_2^0,\dotsc, R_2^k\subseteq S_1\times S_2$ be a
DMTS-linear $k$-ready relation family from $\chi( \mcal I_1)$ to
$\chi( \mcal I_2)$.  We show that $R_1^0,\dotsc, R_1^k$ is an
LTS-linear $k$-ready relation family from $\mcal I_1$ to $\mcal I_2$;
again, the proof that $R_2^0,\dotsc, R_2^k$ is an LTS-linear $k$-ready
relation family from $\mcal I_2$ to $\mcal I_1$ is completely
analogous.  First, we have $( s_1^0, s_2^0)\in R_1^0$.

We already know that $R_1^0,\dotsc, R_1^k$ is an LTS-linear
$k$-switching relation family from $\mcal I_1$ to $\mcal I_2$, so we
only need to see the extra conditions in Definition~\ref{de:lkready-lts}.
Let $( s_1, s_2)\in R_1^k$ and assume $k$ to be even (the proof is
similar for $k$ odd).  Let $( s_1, \tau, t_1)\in T_1^*$, then
$s_1\smay \tau_1 t_1$, hence there is $s_2\smay \tau_2 t_2$,
\ie~$( s_2, \tau, t_2)\in T_2^*$, such that
$\forall t_2\must{}_2 N_2: \exists t_1\must{}_1 N_1: \forall( a,
u_1)\in N_1: \exists( a, u_2)\in N_2$.

Let $( t_2, a, u_2)\in T_2$, then $t_2\must{}_2 N_2=\{( a, u_2)\}$.
Hence there is $t_1\must{}_1 N_1$ such that
$\forall( a, u_1)\in N_1: \exists( a, u_2')\in N_2$, but then
$N_1=\{( a, u_1)\}$, hence $( t_1, a, u_1)\in T_1$. \qed
\end{proof}


\section{Generalized Simulation Games}
\label{se:simgame}

In order to provide context to the constructions in
Sect.~\ref{se:bspec}, we introduce a notion of \emph{generalized
  simulation game}.  This is a generalization of Stirling's
bisimulation game~\cite{DBLP:conf/banff/Stirling95} which permits to
define most of the preorders and equivalences in van Glabbeek's
linear-time--branching-time spectrum~\cite{inbook/hpa/Glabbeek01}.
See also~%
\cite{DBLP:journals/tcs/FahrenbergL14} for a quantitative version of
these games.

Let
$\mcal I_1=( S_1, s^0_1, T_1), \mcal I_2=( S_2, s^0_2, T_2)\in \LTS$.
We will define a game played by two players, I and II, which
intuitively proceeds as follows.  Starting from the initial
configuration $( s^0_1, s^0_2)$, player~I chooses a transition from
$s^0_1$.  Player~II then has to match this with a transition with the
same label from $s^0_2$, and the game continues from the new
configuration $( s_1, s_2)$ given by the target states of the two
chosen transitions.  The game is won by player~I if she plays a
transition which player~II cannot match; if this never happens,
player~II wins.

We will see below that player~II has a strategy to always win this
game iff there is a \emph{simulation} from $\mcal I_1$ to $\mcal I_2$.
In order to characterize other preorders and equivalences, we
introduce some variability into the game:
\begin{itemize}
\item In any configuration $( s_1, s_2)$, player~I may choose to
  \emph{switch sides} and from now on play transitions from the right
  ($s_2$) component instead of the left, which player~II then has to
  answer by matching transitions on the left side.  Player~I may later
  choose to switch sides again.
\item In any configuration $( s_1, s_2)$, player~I may also choose to
  play a \emph{last} transition which ends the game.  If player~II can
  match the transition, then she has won; otherwise, player~I wins.
\end{itemize}
Different combinations of these variations, together with restrictions
on when and how often player~I is allowed to switch sides, will define
games which characterize all branching equivalences in the
linear-time--branching-time spectrum.

We formalize the above description.  The sets of \emph{extended
  states} for the players are
\begin{align*}
  C_1 &= ( T_1\times T_2\cup T_2\times T_1)^*\,, \\
  C_2 &= ( T_1\times T_2\cup T_2\times T_1)^*\!.( T_1\cup T_2)\,.
\end{align*}
These keep track of which edges have been previously chosen by the
players.  Note that $C_1$ contains the empty extended state
$\epsilon$.

A \emph{strategy for player~I} is a partial mapping
$\theta_1: C_1\parto T_1\cup T_2$ such that whenever
$\theta_1((( s_1, a_1, t_1),( s_1', a_1', t_1'))\dotsc(( s_n, a_n,
t_n),( s_n', a_n', t_n')))=( s, a, t)$
is defined, then $s= t_n$ or $s= t_n'$.  Hence an edge chosen by
player~I must extend one of the previous two edges.  If
$\theta_1( \epsilon)=( s, a, t)$ is defined, then $s= s^0_1$ or
$s= s^0_2$.  The set of strategies for player~I is denoted $\Theta_1$.
For $c_1\in C_1$ and $\theta_1\in \Theta_1$, the \emph{update}
$\upd( c_1)$ of $c_1$ is defined iff $\theta_1( c_1)$ is defined, and
then $\upd( c_1)= c_1. \theta_1( c_1)\in C_2$.

A \emph{strategy for player~II} is a partial mapping
$\theta_2: C_2\parto T_1\cup T_2$ such that whenever
$\theta_1((( s_1, a_1, t_1),( s_1', a_1', t_1'))\dotsc(( s_n, a_n,
t_n),( s_n', a_n', t_n')).( s, a, t))=( s', a', t')$ is defined, then
$a= a'$, and
\begin{itemize}
\item if $s= t_n$, then $( s', a', t')\in T_2$ and $s'= t_n'$;
\item if $s= t_n'$, then $( s', a', t')\in T_1$ and $s'= t_n$.
\end{itemize}
Hence player~II has to play a transition with the same label as the
last transition played by player~I and on the opposite side of the
game.  The set of strategies for player~II is denoted $\Theta_2$.  For
$c_2\in C_2$ and $\theta_2\in \Theta_2$, the \emph{update}
$\upd_2( c_2)$ of $c_2$ is defined iff $\theta_2( c_2)$ is defined,
and then $\upd_2( c_2)= c_2. \theta_2( c_2)\in C_1$.

Now let $( \theta_1, \theta_2)\in \Theta_1\times \Theta_2$ be a
\emph{strategy pair}, then this induces a finite or infinite
alternating sequence $( c_1^0, c_2^1, c_1^1, c_2^2,\dotsc)$ of
extended states, where $c_1^0= \epsilon$ and for all
$j\ge 1$,
\begin{itemize}
\item $c_2^j$ is defined iff $\theta_2( c_1^{ j- 1})$ is defined, and
  then $c_2^j= \theta_2( c_1^{ j- 1})$;
\item $c_1^j$ is defined iff $\theta_1( c_2^j)$ is defined, and then
  $c_1^j= \theta_1( c_2^j)$.
\end{itemize}
Each extended state in the sequence is a prefix of the succeeding one,
hence these define a finite or infinite string
\begin{equation*}
  \sigma( \theta_1, \theta_2)\in C_1\cup C_2\cup( T_1\times T_2\cup
  T_2\times T_1)^\omega\,.
\end{equation*}

A strategy $\theta_1\in \Theta_1$ is \emph{winning for player~I} if
$\sigma( \theta_1, \theta_2)\in C_2$ for all $\theta_2\in \Theta_2$.
A strategy $\theta_2\in \Theta_2$ is \emph{winning for player~II} if
$\sigma( \theta_1, \theta_2)\in C_1\cup( T_1\times T_2\cup T_2\times
T_1)^\omega$
for all $\theta_1\in \Theta_1$.  The game is determined, so that
player~I has a winning strategy iff player~II does not.

\begin{remark}
  \label{re:memoryless-gensim}
  As the game is about player~II matching transitions played by
  player~I, and once she has done so, past transition labels are
  ignored, it is clear that it suffices to consider \emph{memory-less}
  strategies for both players, \ie~strategies where the transitions
  chosen only depend on the current game configuration instead of all
  past moves.  This is important from an algorithmic point of view,
  but we will not need it below.
\end{remark}

We introduce a \emph{switch counter} $\sco$ which indicates how often
player~I has switched sides to arrive at a given extended state
$c_1\in C_1=( T_1\times T_2\cup T_2\times T_1)^*$.  Intuitively,
$\sco( c_1)$ counts how often the elements in the sequence $c_1$
switch from being in $T_1\times T_2$ to being in $T_2\times T_1$ and
vice versa.  Hence $\sco( c_1)= 0$ iff
$c_1\in( T_1\times T_2)^*\cup( T_2\times T_1)^*$, $\sco( c_1)= 1$ iff
$c_1\in( T_1\times T_2)^+( T_2\times T_1)^+\cup( T_2\times T_1)^+(
T_1\times T_2)^+$,
etc.  For $c_2\in C_2$, we similarly have $\sco( c_2)= 0$ iff
$c_2\in( T_1\times T_2)^* T_1\cup( T_2\times T_1)^* T_2$,
$\sco( c_2)= 1$ iff
$c_2\in( T_1\times T_2)^+ T_2\cup( T_2\times T_1)^+ T_1$, etc.

\begin{definition}
  \label{de:kswitch}
  Let $k\ge 0$.  A strategy $\theta_1\in \Theta_1$ is
  \emph{$k$-switching} if $\sco( \theta_1( c_1))\le k$ for all
  $c_1\in C_1$ for which $\theta( c_1)$ is defined.  It is
  \emph{$k$-ready switching} if $\sco( c_1)\le k$ for all $c_1\in C_1$
  for which $\theta( c_1)$ is defined.
\end{definition}

Hence a $0$-switching strategy for player~I can never switch sides,
whence a $0$-ready switching strategy can switch sides once, but must
be undefined after.  Similarly, a $1$-switching strategy can switch
sides once, and a $1$-ready switching strategy can then switch once
more, but no more player~I moves are defined after.  We denote the
sets of $k$-switching strategies by $\Theta_1^k$ and of $k$-ready
switching strategies by $\Theta_1^{ k\textup{-r}}$.  Note that
$\Theta_1^k\subseteq \Theta_1^{ k\textup{-r}}$ for all $k\ge 0$, and
$\Theta_1^\infty= \Theta_1^{ \infty\textup{-r}}= \Theta_1$.

For any subset $\Theta_1'\subseteq \Theta_1$, the
\emph{$\Theta_1'$-game} denotes the above game when player~I is only
permitted to use strategies in $\Theta_1'$.

\begin{proposition}
  \label{pr:simgame-app}
  Let $k\ge 0$ and $\mcal I_1, \mcal I_2\in \LTS$.  Then
  $\mcal I_1\sim_k \mcal I_2$ iff player~II has a winning strategy in
  the $\Theta_1^k$-game on $\mcal I_1, \mcal I_2$, and
  $\mcal I_1\sim_k^{ \textup r} \mcal I_2$ iff player~II has a winning
  strategy in the $\Theta_1^{ k\textup{-r}}$-game on
  $\mcal I_1, \mcal I_2$.
\end{proposition}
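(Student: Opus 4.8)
The plan is to establish, for each $k$, a correspondence between winning strategies for player~II in the appropriate restricted game and the existence of branching $k$-switching (resp.\ $k$-ready) relation families going both ways between $\mcal I_1$ and $\mcal I_2$. Since $\sim_k$ asks for such a family from $\mcal I_1$ to $\mcal I_2$ \emph{and} one in the reverse direction, while the game is played on the ordered pair $( \mcal I_1, \mcal I_2)$ but lets player~I switch sides, the first step is to observe how the switching mechanism encodes bidirectionality. When player~I never switches ($0$-switching), player~II must match every left-hand transition, which is exactly a simulation from $\mcal I_1$ to $\mcal I_2$; allowing one switch forces player~II to also match right-hand transitions after the switch, which is exactly the reverse-direction obligation of a nested ($1$-switching) family, and so on. The switch counter $\sco$ thus tracks the index $j$ of the relation $R^j$ in Def.~\ref{de:bkswitch-lts}, with even $j$ corresponding to configurations reached after an even number of switches (matching left-to-right) and odd $j$ after an odd number (matching right-to-left).

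I would prove both implications by translating between memory-less strategies and relation families. For the direction \emph{($k$-switching family $\Rightarrow$ winning strategy}), assume families $R^0,\dotsc, R^k$ (from $\mcal I_1$ to $\mcal I_2$) and $\bar R^0,\dotsc,\bar R^k$ (from $\mcal I_2$ to $\mcal I_1$) exist. Given a game position with current configuration $( s_1, s_2)$ and current switch count $j$, the defining clauses of Def.~\ref{de:bkswitch-lts} guarantee that whichever transition player~I plays on the currently active side is matchable: if player~I stays on the same side, the within-level clause (the first bullet for even/odd $j$) provides a match landing in the same $R^j$; if player~I switches (raising the count to $j+1\le k$), the $j<k$ clause provides a match landing in $R^{j+1}$. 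By determinacy and the bound $\sco\le k$ built into $\Theta_1^k$, this yields a well-defined winning strategy $\theta_2$ for player~II. For the converse, from a winning $\theta_2$ I would \emph{define} $R^j$ to consist of all reachable configurations $( s_1, s_2)$ whose extended state $c_1$ has $\sco( c_1)=j$; Remark~\ref{re:memoryless-gensim} lets me assume $\theta_2$ memory-less, so this is well-defined, and the fact that $\theta_2$ answers every player~I move translates clause-by-clause into the matching requirements of Def.~\ref{de:bkswitch-lts}.

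The $k$-ready case is handled by the same translation with one added bookkeeping step: a $k$-ready switching strategy permits exactly one extra player~I move after the count reaches $k$ (cf.\ Def.~\ref{de:kswitch}), and winning against all such strategies forces player~II to be able to match that final transition. On the relation-family side this is precisely the extra ready clause in the $k$-ready family (matching availability of actions at level $R^k$), with the even/odd distinction selecting which side's last transition must be matched.

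The main obstacle I anticipate is bookkeeping the even/odd parity and the side-switching consistently, so that $\sco$-based level assignment matches the even/odd clause structure of Def.~\ref{de:bkswitch-lts} and so that the $\infty$ case (where no switch bound applies and the family becomes a bisimulation, requiring matching on both sides at every level) falls out cleanly as the limit. Verifying that the reverse-direction family $\bar R^0,\dotsc,\bar R^k$ is correctly recovered---it arises from the post-switch portions of player~II's winning strategy rather than from a separately defined game---is the delicate point; here I would lean on the symmetry of the extended-state definition, where $( T_1\times T_2\cup T_2\times T_1)^*$ treats both orientations uniformly. The remaining verifications are the routine clause-by-clause checks already rehearsed in the proof of Thm.~\ref{th:spec-game}.
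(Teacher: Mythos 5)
Your overall strategy---translating between winning player-II strategies and relation families, using the switch counter $\sco$ to index the levels $R^j$---is exactly the paper's (the paper only gives this as a two-sentence sketch), and most of your clause-by-clause bookkeeping is right. But there is one genuine conceptual error in how you account for bidirectionality, and it would break the construction if you carried the plan out.

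You write that the reverse-direction family $\bar R^0,\dotsc,\bar R^k$ from $\mcal I_2$ to $\mcal I_1$ ``arises from the post-switch portions of player~II's winning strategy.'' It does not. The post-switch portions of a play that \emph{starts on the left} yield the higher-indexed relations $R^1, R^2,\dotsc$ of the \emph{forward} family: in Def.~\ref{de:bkswitch-lts} the passage from $R^j$ to $R^{j+1}$ is precisely a reversal of the matching direction inside one and the same family, and that is what a switch in the game encodes. Concretely, your proposed extraction would fail on the initial condition: $\bar R^0$ must contain the initial pair, but configurations reached after at least one switch in a left-started play are in general not the initial configuration. The bidirectionality of $\sim_k$ is instead encoded by the fact that player~I may open the game on \emph{either} side ($\theta_1(\epsilon)$ may start from $s^0_1$ or from $s^0_2$), and by the definition of $\sco$ a purely right-started prefix in $(T_2\times T_1)^*$ still has switch count $0$. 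So the forward family is read off from the $\theta_2$-consistent plays in which player~I's first move is in $T_1$, and the reverse family from those in which it is in $T_2$ (with the pairs inverted); within each, $\sco$ gives the level. The same correction applies to your claim that ``allowing one switch \dots is exactly the reverse-direction obligation'': one switch gives $R^1$ of the forward nested simulation, not $\bar R^0$. Your converse direction has the dual issue: to show player~II wins against \emph{all} $\Theta_1^k$-strategies you must use \emph{both} families, the forward one against left-started strategies and the reverse one against right-started ones, rather than a single family covering both via its odd levels. With that repair, the rest of your plan (including the $k$-ready case and the memory-less reduction via Remark~\ref{re:memoryless-gensim}) goes through as in the paper.
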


\begin{proof}
  If $\theta_2\in \Theta_2$ is winning for player~II in the
  specification $\Theta_1^k$-game, then any strategy pair
  $( \theta_1, \theta_2)$ can be used to construct a branching
  $k$-switching relation family.  Conversely, any branching
  $k$-switching relation family can be used to construct a
  (memory-less) winning player-II strategy in the specification
  $\Theta_1^k$-game.  The proof is similar for the $k$-ready case. \qed
\end{proof}



\begin{remark}
  \label{re:ltbt-game}
  By suitably modifying the $\sco$ notion, also \emph{preorders} in
  the spectrum can be characterized.  By introducing a notion of
  \emph{blind} strategy for player~I, also linear relations in the
  spectrum can be covered.  See~\cite{DBLP:journals/tcs/FahrenbergL14}
  for details.
\end{remark}

\section{Specification Games}
\label{se:spec-games}

We can now use the developments in the last section to introduce
general specification games on $\DMTS$ which can be instantiated to
yield specification theories which are adequate for any equivalence in
the linear-time--branching-time spectrum.

Let
$\mcal D_1=( S_1, S^0_1, \omay_1, \omust_1), \mcal D_2=( S_2, S^0_2,
\omay_2, \omust_2)\in \DMTS$.
The sets of \emph{extended states} for the players are
\begin{align*}
  C_1 &= (( \omay_1\times \omay_2)\cup( \omust_2\times \omust_1\times
  \Sigma\times S_1\times \Sigma\times S_2))^*\,, \\
  C_2 &= (( \omay_1\times \omay_2)\cup( \omust_2\times \omust_1\times
  \Sigma\times S_1\times \Sigma\times S_2))^*\!.( \omay_1\cup
  \omust_2)\,, \\
  C_1' &= (( \omay_1\times \omay_2)\cup( \omust_2\times \omust_1\times
  \Sigma\times S_1\times \Sigma\times S_2))^*\!.( \omust_2\times
  \omust_1)\,, \\
  C_2' &= (( \omay_1\times \omay_2)\cup( \omust_2\times \omust_1\times
  \Sigma\times S_1\times \Sigma\times S_2))^*\!. \\
  &\hspace*{20em} ( \omust_2\times
  \omust_1\times \Sigma\times S_1)\,.
\end{align*}
This conveys the following intuition: At each round of the game,
player~I either plays a may-transition in $\mcal D_1$ or a disjunctive
must-transition in $\mcal D_2$.  In the first case, player~II answers
with a matching may-transition in $\mcal D_2$, and the game proceeds.
In the second case, player~II answers with a disjunctive
must-transition in $\mcal D_1$, bringing the game into a state where
player~I now must play a branch $( a, t)$ of the chosen
must-transition in $\mcal D_1$.  To this, player~II must answer with a
matching branch in the must-transition in $\mcal D_2$, and then the
game can proceed.

A \emph{strategy for player~I} hence consists of two partial mappings
$\theta_1: C_1\parto( \omay_1\cup \omust_2)$, $\theta_1': C_1'\parto
\Sigma\times S_1$ such that
\begin{itemize}
\item if $c_1= c_1^1\dotsc c_1^n\in C_1$, $\theta_1( c_1)$ is defined,
  and
  $c_1^n=(( s_n, a_n, t_n),( s_n', a_n', t_n'))\in( \omay_1\times
  \omay_2)$ or
  $c_1^n=(( s_n, N_n),( s_n', N_n'), a_n, t_n, a_n', t_n')\in(
  \omust_2\times \omust_1\times \Sigma\times S_1\times \Sigma\times
  S_2)$, then
  \begin{itemize}
  \item if $\theta_1( c_1)=( s, a, t)\in \omay_1$, then $s= t_n$;
  \item if $\theta_1( c_1)=( s, N)\in \omust_2$, then $s= t_n'$;
  \end{itemize}
\item if $c_1'= c_1''.(( s, N),( s', N'))\in C_1'$ and
  $\theta_1'( c_1')=( a, t)$ is defined, then $( a, t)\in N'$.
\end{itemize}
This says that from an extended state in $C_1$, player~I must choose
a transition from one of the previous target states, and from a state
in $C_1'$, player~I must choose a branch of the must-transition just
chosen by player~II.

If $\theta_1( \epsilon)$ is defined, then
\begin{itemize}
\item if $\theta_1( \epsilon)=( s, a, t)\in \omay_1$, then $s\in S_1^0$;
\item if $\theta_1( \epsilon)=( s, N)\in \omust_2$, then $s\in S_2^0$.
\end{itemize}

A \emph{strategy for player~II} consists of two partial mappings
$\theta_2: C_2\parto( \omay_2\cup \omust_1)$, $\theta_2': C_2'\parto
\Sigma\times S_2$ such that
\begin{itemize}
\item if $c_2= c_2^1\dotsc c_2^n. \tau\in C_2$ and $\theta_2( c_2)$ is
  defined, and
  $c_2^n=(( s_n, a_n, t_n),$ $( s_n', a_n', t_n'))\in( \omay_1\times
  \omay_2)$
  or
  $c_2^n=(( s_n, N_n),( s_n', N_n'), a_n, t_n, a_n', t_n')\in(
  \omust_2\times \omust_1\times \Sigma\times S_1\times \Sigma\times
  S_2)$, then
  \begin{itemize}
  \item if $\tau=( s, a, t)\in \omay_1$, then $\theta_2( c_2)=( s',
    a, t')\in \omay_2$ with $s'= t_n'$;
  \item if $\tau=( s, N)\in \omust_2$, then $\theta_2( c_2)=( s',
    N')\in \omust_1$ with $s'= t_n$;
  \end{itemize}
\item if $c_2'= c_2''.(( s, N),( s', N'),( a, t))\in C_2'$ and
  $\theta_2'( c_2')=( a', t')$ is defined, then $( a', t')\in N$ and
  $a'= a$.
\end{itemize}
The sets of strategies for players~I and~II are denoted $\Theta_1$ and
$\Theta_2$.

Let $( \theta_1, \theta_1')\in \Theta_1$,
$( \theta_2, \theta_2')\in \Theta_2$, $c_1\in C_1$, $c_2\in C_2$,
$c_1'\in C_1'$, and $c_2'\in C_2'$.  We define the update functions:
\begin{itemize}
\item If $\theta_1( c_1)$ is defined, then
  $\upd( c_1)= c_1. \theta_1( c_1)\in C_2$.
\item If $\theta_2( c_2)$ is defined, then
  $\upd( c_2)= c_2. \theta_2( c_2)\in C_1$ if
  $\theta_2( c_2)\in \omay_2$ and
  $\upd( c_2)= c_2. \theta_2( c_2)\in C_1'$ if
  $\theta_2( c_2)\in \omust_1$.
\item If $\theta_1'( c_1')$ is defined, then $\upd( c_1')=
  c_1'. \theta_1'( c_1')\in C_2'$.
\item If $\theta_2'( c_2')$ is defined, then $\upd( c_2')=
  c_2'. \theta_2'( c_2')\in C_1$.
\end{itemize}
Hence a strategy pair $(( \theta_1, \theta_1'),( \theta_2,
\theta_2'))\in \Theta_1\times \Theta_2$ induces, via the update
functions, a finite or infinite string
\begin{multline*}
  \sigma(( \theta_1, \theta_1'),( \theta_2, \theta_2'))\in C_1\cup
  C_2\cup C_1'\cup C_2' \\
  \cup(( \omay_1\times \omay_2)\cup( \omust_2\times \omust_1\times
  \Sigma\times S_1\times \Sigma\times S_2))^\omega\,. \hspace*{4em}
\end{multline*}

Then $( \theta_1, \theta_1')\in \Theta_1$ is said to be \emph{winning
  for player~I} if
$\sigma(( \theta_1, \theta_1'),( \theta_2, \theta_2'))\in C_2\cup
C_2'$
for all $( \theta_2, \theta_2')\in \Theta_2$, and
$( \theta_2, \theta_2')\in \Theta_2$ is \emph{winning for player~II}
if
$\sigma(( \theta_1, \theta_1'),( \theta_2, \theta_2'))\in C_1\cup
C_1'\cup(( \omay_1\times \omay_2)\cup( \omust_2\times \omust_1\times
\Sigma\times S_1\times \Sigma\times S_2))^\omega$
for all $( \theta_1, \theta_1')\in \Theta_1$.  The game is determined,
\ie~player~I has a winning strategy iff player~II does not.

We introduce a switching counter $\sco$, similarly to the one of the
preceding section.  For $c_1\in C_1$,
\begin{itemize}
\item $\sco( c_1)= 0$ iff
  $c_1\in( \omay_1\times \omay_2)^*\cup( \omust_2\times \omust_1\times
  \Sigma\times S_1\times \Sigma\times S_2)^*$;
\item $\sco( c_1)= 1$ iff
  $c_1\in( \omay_1\times \omay_2)^+( \omust_2\times
  \omust_1\times \Sigma\times S_1\times \Sigma\times S_2)^+\cup(
  \omust_2\times \omust_1\times \Sigma\times S_1\times \Sigma\times
  S_2)^+( \omay_1\times \omay_2)^+$;
\end{itemize}
etc., and for $c= c_1. c'\in C_2\cup C_1'\cup C_2'$ such that
$c_1\in C_1$ is the longest $C_1$-prefix of $c$,
$\sco( c)= \sco( c_1)$.  We also copy Definition~\ref{de:kswitch} to
introduce $k$-switching and $k$-ready switching strategies in
$\Theta_1$, and denote again the subsets of $k$-switching strategies
by $\Theta_1^k$ and of $k$-ready switching strategies by
$\Theta_1^{ k\textup{-r}}$.

\begin{proposition}
  Let $k\ge 0$ and $\mcal D_1, \mcal D_2\in \DMTS$.  Then
  $\mcal D_1\le_k \mcal D_2$ iff player~II has a winning strategy in
  the $\Theta_1^k$-game on $\mcal D_1, \mcal D_2$, and
  $\mcal D_1\le_k^{ \textup r} \mcal D_2$ iff player~II has a winning
  strategy in the $\Theta_1^{ k\textup{-r}}$-game on
  $\mcal D_1, \mcal D_2$.
\end{proposition}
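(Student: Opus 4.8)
The plan is to prove both biconditionals by the back-and-forth correspondence between winning strategies for player~II and branching relation families, in exact parallel to the proof of Prop.~\ref{pr:simgame-app} but lifted to the two-phase move structure of the $\DMTS$ game. The guiding dictionary is that a \emph{may-round} of the game (player~I plays $s_1\may a t_1$ in $\omay_1$, player~II answers $s_2\may a t_2$ in $\omay_2$) realizes a may-condition of Def.~\ref{de:bkswitch-dmts}, while a \emph{must-round} (player~I plays $s_2\must{}_2 N_2$, player~II answers $s_1\must{}_1 N_1$, then player~I reveals a branch $(a,t_1)\in N_1$ and player~II matches it by $(a,t_2)\in N_2$) realizes a must-condition. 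The two tracks $R_1^0,\dotsc,R_1^k$ and $R_2^0,\dotsc,R_2^k$ correspond to the two admissible openings of player~I --- a may-transition out of $S_1^0$ (track~$1$) or a must-transition out of $S_2^0$ (track~$2$) --- and the superscript $j$ is read off the switch counter $\sco$, its parity relative to the track determining may- versus must-mode.

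For the direction $\mcal D_1\le_k\mcal D_2\Rightarrow$ player~II wins, I would fix a branching $k$-switching relation family and build a memory-less player-II strategy $(\theta_2,\theta_2')$ maintaining the invariant that the current configuration $(s_1,s_2)$ lies in $R_i^{\sco}$, where $i$ is the track set by player~I's opening. In a may-round, $(s_1,s_2)\in R_i^{\sco}$ with $\sco$ of the appropriate parity supplies via the may-condition a matching $s_2\may a t_2$ with $(t_1,t_2)\in R_i^{\sco}$, or, after a switch, with $(t_1,t_2)\in R_i^{\sco+1}$; in a must-round player~II first selects $N_1$ by the must-condition and, once player~I reveals its branch, the matching branch of $N_2$, landing again in the appropriate $R_i^{\cdot}$. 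Since player~I is confined to $\Theta_1^k$, the counter never exceeds $k$ (Def.~\ref{de:kswitch}), so whenever a switch is demanded it is a $j<k$ clause of Def.~\ref{de:bkswitch-dmts} that applies; player~II can therefore always respond, the induced play stays out of $C_2\cup C_2'$, and $\theta_2$ is winning.

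For the converse I would fix a winning player-II strategy and read off the relation family from the reachable configurations. Setting $R_i^j$ to be the set of configurations $(s_1,s_2)$ occurring, at switch count $j$ and in track~$i$, in some play in which player~II follows its strategy against an arbitrary $k$-switching player~I, and seeding $R_1^0$ and $R_2^0$ with the initial pairs forced by player~I's freedom to open from any state of $S_1^0$ resp.\ $S_2^0$, yields a family satisfying Def.~\ref{de:bkswitch-dmts}: each quantified clause is discharged by letting player~I make the corresponding move and reading off the guaranteed winning answer of player~II, the $j<k$ clauses using that a single switch keeps player~I within $\Theta_1^k$. The two $k$-ready statements follow by the same construction; the only addition is that the ready family's boundary conditions at level $k$ mirror exactly the position in which a $k$-ready switching strategy (allowed one final switch, but undefined thereafter) makes its last move, so that player~II's remaining obligation is only to match the \emph{availability} of a single transition, not to extend the relation.

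The part I expect to be delicate is bookkeeping rather than conceptual novelty: one must keep the switch counter $\sco$, the may/must mode (parity of $\sco$ relative to the track), and the intermediate branch-selection sub-round (extended states $C_1'$ and $C_2'$) rigorously aligned with the indices $R_i^j$ and with the $j<k$ boundary, and one must verify that the restriction to $\Theta_1^k$ is precisely what forbids player~I from forcing a response past level~$k$. A further point needing care is the matching of \emph{all} initial states when $S_1^0$, $S_2^0$ are not singletons; here the initial clauses of Def.~\ref{de:bkswitch-dmts} must be obtained from player~I's liberty to open from any initial state together with player~II's winning designation of a partner. Memory-lessness of the player-II strategy (\cf~Remark~\ref{re:memoryless-gensim}) keeps the forward construction clean, and determinacy is not needed, since both directions are established by explicit construction.
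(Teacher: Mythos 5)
Your proposal is correct and takes essentially the same route as the paper: the paper dispatches this proposition by noting it is proved ``similarly to Prop.~\ref{pr:simgame-app}'', i.e., by exactly the correspondence you describe between winning player-II strategies and branching $k$-switching (resp.\ $k$-ready) relation families, lifted to the two-phase must-rounds of the $\DMTS$ game. Your account is in fact more detailed than the paper's one-line sketch, and the bookkeeping points you flag (alignment of $\sco$, track, and the $C_1'$/$C_2'$ sub-round with the indices $R_i^j$, and the treatment of non-singleton initial-state sets) are precisely the details the paper leaves implicit.
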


\begin{proof}
  Similar to the proof of Proposition~\ref{pr:simgame-app}. \qed
\end{proof}

It is again sufficient to consider memory-less strategies for both
players, \cf~Remark~\ref{re:memoryless-gensim}.

\section{Game-Based Proof of Theorem~\ref{th:spec-game}}

We now show a game-based proof of Theorem~\ref{th:spec-game} which
relates $\le_k$ with $\sim_k$ and $\le_k^{ \textup r}$ with
$\sim_k^{ \textup r}$.  This is based on exposing an isomorphism between
generalized simulation games on LTS and corresponding specification
games on their embeddings into $\DMTS$.  Hence it can be used to show
the more general result that any restriction
$\Theta_1'\subseteq \Theta_1$ in the specification game yields a
specification theory adequate for an equivalence relation defined on
$\LTS$ by a similar restriction of the generalized simulation game.

\begin{proof}[of Theorem~\ref{th:spec-game}]
  We show that for $\mcal I_1, \mcal I_2\in \LTS$,
  $\chi( \mcal I_1)\le_k \chi( \mcal I_2)$ iff
  $\mcal I_1\sim_k \mcal I_2$ and apply Lemma~\ref{le:specth}; the
  proof for the $k$-ready relations is similar.

  The essence of the proof is that the simulation $\Theta_k$-game on
  $\mcal I_1$, $\mcal I_2$ and the specification $\Theta_k$-game on
  $\chi( \mcal I_1)$, $\chi( \mcal I_2)$ are isomorphic.  We expose an
  injective mapping $\Phi$, from extended states in the simulation game
  to extended states in the specification game, which essentially maps
  transitions in $\mcal I_1$ to may-transitions in $\chi( \mcal I_1)$
  and transitions in $\mcal I_2$ to must-transitions in $\chi( \mcal
  I_2)$.

  We then show that extended states outside the image of $\Phi$ are
  unreachable in any specification game, hence $\Phi$ is a bijection
  between extended states in the simulation game and ``proper''
  extended states in the specification game.

  Using this, we then extend $\Phi$ to an injective mapping from
  strategies in the simulation game to strategies in the specification
  game, and we show that strategies outside the image of $\Phi$ need
  not be considered.  Also, $\Phi$ preserves and reflects the
  switching counter, and we show that a strategy $\theta_1$ is winning
  for player~I in the simulation game iff $\Phi( \theta_1)$ is winning
  for player~I in the specification game.

  Write $\mcal I_1=( S_1, s^0_1, T_1)$,
  $\mcal I_2=( S_2, s^0_2, T_2)$,
  $\chi( \mcal I_1)=( S_1,\{ s^0_1\}, \omay_1, \omust_1)$, and
  $\chi( \mcal I_2)=( S_2,\{ s^0_2\}, \omay_2, \omust_2)$.  In this
  proof, we denote extended states and strategies in the specification
  game as in Sect.~\ref{se:spec-games}, whereas extended states and
  strategies in the game of Sect.~\ref{se:simgame} are denoted using
  tildes.

  We define mappings $\Phi_1: \tilde C_1\to C_1$,
  $\Phi_2: \tilde C_2\to C_2$.  Let
  $\phi_1:( T_1\times T_2\cup T_2\times T_1)\to(( \omay_1\times
  \omay_2)\cup( \omust_2\times \omust_1\times \Sigma\times S_1\times
  \Sigma\times S_2))$ and
  $\phi_2:( T_1\cup T_2)\to( \omay_1\cup \omust_2)$ be given by
  \begin{align*}
    \phi_1(( s, a, t),( s', a', t')) &=
    \begin{cases}
      (( s, a, t),( s', a', t')) &\text{if }( s, a, t)\in T_1\,, \\
      (( s,\{( a, t)\}),( s',\{( a', t')\}), a', t', a, t) &\text{if
      }( s, a, t)\in T_2\,,
    \end{cases} \\
    \phi_2( s, a, t) &=
    \begin{cases}
      ( s, a, t) &\text{if }( s, a, t)\in T_1\,, \\
      ( s,\{( a, t)\}) &\text{if }( s, a, t)\in T_2\,,
    \end{cases}
  \end{align*}
  and for $\tilde c_1= \tilde c_1^1\dotsc \tilde c_1^n\in \tilde C_1$
  and $\tilde c_2= \tilde c_1. \tilde c_2'\in \tilde C_2$, define
  $\Phi_1( \tilde c_1)= \phi_1( \tilde c_1^1)\dotsc \phi_1( \tilde
  c_1^n)$
  and
  $\Phi_2( \tilde c_2)= \Phi_1( \tilde c_1). \phi_2( \tilde c_2')$.
  We also define
  $\Phi_3:( T_1\times T_2\cup T_2\times T_1)^\omega\to( \omust_2\times
  \omust_1\times \Sigma\times S_1\times \Sigma\times S_2)^\omega$ by
  $\Phi_3( d_1 d_2\dotsc)= \phi_1( d_1) \phi_1( d_2)\dotsc$ and let
  $\Phi= \Phi_1\cup \Phi_2\cup \Phi_3$.

  We call extended states in the image of $\Phi_1$, $\Phi_2$
  \emph{proper}, and we note that any reachable extended state in $C_1$
  and $C_2$ is proper: Let $c_1= c_1^1\dotsc c_1^n\in C_1$ and
  $j\in\{ 1,\dotsc, n\}$ such that
  $c_1^j=(( s_j, N_j),( s_j', N_j'), a_j, t_j, a_j', t_j')\in(
  \omust_2\times \omust_1\times \Sigma\times S_1\times \Sigma\times
  S_2)$.
  Then $N_j=\{( b_j, u_j)\}$ and $N_j'=\{( b_j', u_j')\}$ for some
  $( s_j, b_j, u_j)\in T_2$ and $( s_j', b_j', u_j')\in T_1$.  Now if
  the extended state $c_1$ will be reached during any game, then
  $c_1^1\dotsc c_1^{ j- 1}.(( s_j, N_j),( s_j', N_j'))\in C_1'$ must
  also have been reached, and then $( a_j, t_j)\in N_j'$ and
  $( a_j', t_j')\in N_j$ by the definition of strategies.  But $N_j'$
  and $N_j$ are one-element sets, so that we must have $a_j= b_j'$,
  $t_j= u_j'$, $a_j'= b_j$, and $t_j'= u_j$.  Hence we can assume that
  if
  $c_1^j\in( \omust_2\times \omust_1\times \Sigma\times S_1\times
  \Sigma\times S_2)$,
  then
  $c_1^j=(( s_j,\{( b_j, u_j)\}),( s_j',\{( b_j', u_j')\}), b_j',
  u_j', b_j, u_j)$
  for some $( s_j, b_j, u_j)\in T_2$ and $( s_j', b_j', u_j')\in T_1$,
  \ie~$c_1^j= \phi_1(( s_j, b_j, u_j),( s_j', b_j', u_j'))$.

  The functions $\Phi_1$ and $\Phi_2$ are also injective, hence they
  are bijections onto the proper subsets of $C_1$ and $C_2$.  We have
  shown that improper extended states are not reachable, hence
  strategies in $\Theta_1$ and $\Theta_2$ need not be defined on
  improper extended states.

  Next we note that strategies $\theta_1': C_1'\to \Sigma\times S_1$
  and $\theta_2': C_2'\to \Sigma\times S_2$ are unique: If
  $c_1'= c_1''.(( s, N),( s', N'))\in C_1'$ and
  $\theta_1'( c_1')=( a, t)$ is defined, then $( a, t)\in N'$, but
  $N'=\{( b', u')\}$ is a one-element set, hence $a= b'$ and $t= u'$.
  If $\theta_1'( c_1')$ is undefined, then the modification of
  $\theta_1'$ which defines $\theta_1'( c_1')=( b', u')$ is better for
  player~I.  The argument is similar for player~II.  We can henceforth
  assume that $\theta_1'$ and $\theta_2'$ always are the strategies
  defined above.

  We extend the mappings $\Phi_1$ and $\Phi_2$ to strategies.  Let
  $\tilde \theta_1\in \tilde \Theta_1$, then
  $\Phi_1( \tilde \theta_1)=( \theta_1, \theta_1')$, where $\theta_1'$
  is the unique strategy as above,
  $\theta_1( c_1)= \phi_2( \tilde \theta_1( \Phi_1^{ -1}( c_1)))$ for
  any proper extended state $c_1\in C_1$, and $\theta_1( c_1)$
  undefined for $c_1$ improper.  Similarly, for
  $\tilde \theta_2\in \tilde \Theta_2$,
  $\Phi_2( \tilde \theta_2)=( \theta_2, \theta_2')$, where $\theta_2'$
  is the unique player-II strategy,
  $\theta_2( c_2)= \phi_2( \tilde \theta_2( \Phi_2^{ -1}( c_2)))$ for
  any proper extended state $c_2\in C_2$, and $\theta_2( c_2)$
  undefined for $c_2$ improper.  The so-defined functions
  $\Phi_1: \tilde \Theta_1\to \Theta_1$,
  $\Phi_2: \tilde \Theta_2\to \Theta_2$ are injective, hence
  bijections onto their images, which consist precisely of the
  strategies which are the unique strategies on $C_1'$ and $C_2'$ and
  undefined on improper extended states in $C_1$ and $C_2$.  $\Phi_1$
  also preserves and reflects switching counters: for all
  $\theta_1\in \Theta_1$ and $k\ge 0$,
  $\tilde \theta_1\in \tilde \Theta_1^k$ iff
  $\Phi_1( \tilde \theta_1)\in \Theta_1^k$ and
  $\tilde \theta_1\in \tilde \Theta_1^{ k\textup{-r}}$ iff
  $\Phi_1( \tilde \theta_1)\in \Theta_1^{ k\textup{-r}}$.

  Let $( \tilde \theta_1, \tilde \theta_2)\in \tilde \Theta_1\times
  \tilde \Theta_2$ be a strategy pair; we will show that $\sigma(
  \Phi_1( \tilde \theta_1), \Phi_2( \tilde \theta_2))= \Phi( \tilde
  \sigma( \tilde \theta_1, \tilde \theta_2))$.  Let $\tilde c_1\in
  \tilde C_1$, then
  \begin{multline*}
    \Phi_2( \upd( \tilde c_1))= \Phi_2( \tilde c_1. \tilde \theta_1(
    \tilde c_1))= \Phi_1( \tilde c_1). \phi_2( \tilde \theta_1( \tilde
    c_1)) \\= \Phi_1( \tilde c_1). \phi_2( \tilde \theta_1( \Phi_1^{ -1}(
    \Phi_1( \tilde c_1))))= \Phi_1( \tilde c_1). \theta_1( \Phi_1(
    \tilde c_1))= \upd( \Phi_1( \tilde c_1))\,,
  \end{multline*}
  where $\Phi_1( \tilde \theta_1)=( \theta_1, \theta_1')$.  This shows
  that $\Phi$ commutes with the update functions on $\tilde C_1$ and
  $C_1$.  Similarly one can show that $\Phi$ commutes with the update
  functions on $\tilde C_2$ and $C_2$, and the updates on $C_1'$ and
  $C_2'$ are unique because $\theta_1'$ and $\theta_2'$ are the unique
  strategies.  Together with $\Phi( \epsilon)= \epsilon$ and by
  induction, this implies that
  $\Phi( \tilde \sigma( \tilde \theta_1, \tilde \theta_2))= \sigma(
  \Phi_1( \tilde \theta_1), \Phi_2( \tilde \theta_2))$.

  We can now finish the proof.  Let $k\ge 0$ and assume
  $\chi( \mcal I_1)\not\le_k \chi( \mcal I_2)$, then player~I has a
  winning strategy $( \theta_1, \theta_1')\in \Theta_1^k$ in the
  specification $\Theta_1^k$-game on $\chi( \mcal I_1)$,
  $\chi( \mcal I_2)$.  We can assume that $( \theta_1, \theta_1')$ is
  in the image of $\Phi_1$, hence there is
  $\tilde \theta_1\in \tilde \Theta_1^k$ such that
  $\Phi_1( \tilde \theta_1)=( \theta_1, \theta_1')$.  We show that
  $\tilde \theta_1$ is winning for player~I in the
  $\tilde \Theta_1^k$-game on $\mcal I_1$, $\mcal I_2$, which will
  imply $\mcal I_1\not\sim_k \mcal I_2$.  Let
  $\tilde \theta_2\in \tilde \Theta_2$, then
  \begin{equation*}
    \tilde \sigma( \tilde \theta_1, \tilde \theta_2)= \Phi^{ -1}(
    \sigma( \Phi_1( \tilde \theta_1), \Phi_2( \tilde \theta_2)))\in
    \Phi^{ -1}( C_2)\subseteq \tilde C_2\,.
  \end{equation*}

  Now assume that $\mcal I_1\not\sim_k \mcal I_2$ and let
  $\tilde \theta_1\in \tilde \Theta_1^k$ be a winning strategy for
  player~I in the $\tilde \Theta_1^k$-game on $\mcal I_1$,
  $\mcal I_2$.  Let $( \theta_1, \theta_1')= \Phi( \tilde \theta_1)$,
  we show that $( \theta_1, \theta_1')$ is winning for player~I in the
  $\Theta_1^k$-game on $\chi( \mcal I_1)$, $\chi( \mcal I_2)$.  Let
  $( \theta_2, \theta_2')\in \Theta_2$, then we can assume that there
  is $\tilde \theta_2\in \tilde \Theta_2$ such that
  $\Phi_2( \tilde \theta_2)=( \theta_2, \theta_2')$, and
  \begin{equation*}
    \sigma(( \theta_1, \theta_1'),( \theta_2, \theta_2'))= \Phi(
    \tilde \sigma( \tilde \theta_1, \tilde \theta_2)\in \Phi( \tilde
    C_2)\subseteq C_2\,,
  \end{equation*}
  hence $\chi( \mcal I_1)\not\le_k \chi( \mcal I_2)$. \qed
\end{proof}

\section{Conclusion}

We have in this paper extracted a reasonable and general notion of
(behavioral) specification theory, based on previous work by a number
of authors on concrete specification theories in different contexts
and on the well-established notions of characteristic formulae,
adequacy and expressivity.

Using this general concept of specification theory, we have introduced
new concrete specification theories, based on disjunctive modal
transition systems, for most equivalences in van~Glabbeek's
linear-time--branching-time spectrum.  Previously, only specification
theories for bisimilarity have been available, and recent work by
Vogler~\etal calls for work on specification theories for failure
equivalence.  Both failure equivalence and bisimilarity are part of
the linear-time--branching-time spectrum, as are nested simulation
equivalence, impossible-futures equivalence, and many other useful
relations.  We develop specification theories for all branching
equivalences in the spectrum, but we miss some of the linear
equivalences; notably, possible futures and ready trace equivalence
are missing.  We believe that these can be captured by small
modifications to our setting, but leave this for future work.

Our new specification theories should be useful for example in the
setting of the failure semantics of Vogler~\etal, but also in many
other contexts where bisimilarity is not the right equivalence to
consider.  Using our own previous work on the quantitative
linear-time--branching-time spectrum and on quantitative specification
theories for bisimilarity, we also plan to lift our work presented
here to the quantitative setting.

Specification theories for bisimilarity admit notions of conjunction
and composition which enable compositional design and verification,
and also the specification theories of Vogler \etal have (different)
such notions.  Using the game-based setting, we believe one can define
general notions of conjunction and composition defined by games played
on the involved disjunctive modal transition systems.  This is left
for future work.

\bibliographystyle{plain}
\bibliography{mybib}

\begin{thebibliography}{10}

\bibitem{DBLP:journals/scp/AcetoFFIP13}
Luca Aceto, Ignacio F{\'a}bregas, David de~Frutos-Escrig, Anna
  Ing{\'o}lfsd{\'o}ttir, and Miguel Palomino.
\newblock On the specification of modal systems.
\newblock {\em Sci. Comput. Program.}, 78(12):2468--2487, 2013.

\bibitem{DBLP:journals/iandc/AcetoFGI04}
Luca Aceto, Wan Fokkink, Rob~J. van Glabbeek, and Anna
  Ing{\'{o}}lfsd{\'{o}}ttir.
\newblock Nested semantics over finite trees are equationally hard.
\newblock {\em Inf. Comput.}, 191(2):203--232, 2004.

\bibitem{books/AcetoILS07}
Luca Aceto, Anna Ing{\'o}lfsd{\'o}ttir, Kim~G. Larsen, and Ji{\v r}{\'i} Srba.
\newblock {\em Reactive Systems}.
\newblock {Cambridge Univ. Press}, 2007.

\bibitem{DBLP:journals/eatcs/AntonikHLNW08}
Adam Antonik, Michael Huth, Kim~G. Larsen, Ulrik Nyman, and Andrzej W{\k
  a}sowski.
\newblock 20 years of modal and mixed specifications.
\newblock {\em Bull. EATCS}, 95:94--129, 2008.

\bibitem{DBLP:conf/fase/BauerDHLLNW12}
Sebastian~S. Bauer, Alexandre David, Rolf Hennicker, Kim~G. Larsen, Axel Legay,
  Ulrik Nyman, and Andrzej W{\k a}sowski.
\newblock Moving from specifications to contracts in component-based design.
\newblock In Juan de~Lara and Andrea Zisman, editors, {\em FASE}, volume 7212
  of {\em {Lect. Notes Comput. Sci.}}, pages 43--58. {Springer-Verlag}, 2012.

\bibitem{DBLP:journals/fmsd/BauerFJLLT13}
Sebastian~S. Bauer, Uli Fahrenberg, Line Juhl, Kim~G. Larsen, Axel Legay, and
  Claus Thrane.
\newblock Weighted modal transition systems.
\newblock {\em Form. Meth. Syst. Design}, 42(2):193--220, 2013.

\bibitem{DBLP:journals/mscs/BauerJLLS12}
Sebastian~S. Bauer, Line Juhl, Kim~G. Larsen, Axel Legay, and Ji{\v r}\'{\i}
  Srba.
\newblock Extending modal transition systems with structured labels.
\newblock {\em Math. Struct. Comput. Sci.}, 22(4):581--617, 2012.

\bibitem{DBLP:conf/atva/BenesCK11}
Nikola Bene{\v s}, Ivana {\v C}ern{\'a}, and Jan K{\v r}et\'{\i}nsk{\'y}.
\newblock Modal transition systems: Composition and {LTL} model checking.
\newblock In Tevfik Bultan and Pao-Ann Hsiung, editors, {\em ATVA}, volume 6996
  of {\em {Lect. Notes Comput. Sci.}}, pages 228--242. {Springer-Verlag}, 2011.

\bibitem{DBLP:conf/concur/BenesDFKL13}
Nikola Bene{\v s}, Beno\^{\i}t Delahaye, Uli Fahrenberg, Jan K{\v
  r}et\'{\i}nsk{\'y}, and Axel Legay.
\newblock {H}ennessy-{M}ilner logic with greatest fixed points as a complete
  behavioural specification theory.
\newblock In Pedro~R. D'Argenio and Hern{\'a}n~C. Melgratti, editors, {\em
  CONCUR}, volume 8052 of {\em {Lect. Notes Comput. Sci.}}, pages 76--90.
  {Springer-Verlag}, 2013.

\bibitem{DBLP:journals/scp/BertrandLPR12}
Nathalie Bertrand, Axel Legay, Sophie Pinchinat, and Jean{-}Baptiste Raclet.
\newblock Modal event-clock specifications for timed component-based design.
\newblock {\em Sci. Comput. Program.}, 77(12):1212--1234, 2012.

\bibitem{DBLP:journals/tcs/BoudolL92}
G{\'e}rard Boudol and Kim~G. Larsen.
\newblock Graphical versus logical specifications.
\newblock {\em {Theor. Comput. Sci.}}, 106(1):3--20, 1992.

\bibitem{DBLP:journals/jacm/BrookesHR84}
Stephen~D. Brookes, C.~A.~R. Hoare, and A.~W. Roscoe.
\newblock A theory of communicating sequential processes.
\newblock {\em J. {ACM}}, 31(3):560--599, 1984.

\bibitem{DBLP:conf/acsd/BujtorSV15}
Ferenc Bujtor, Lev Sorokin, and Walter Vogler.
\newblock Testing preorders for {dMTS}: Deadlock- and the new
  deadlock/divergence-testing.
\newblock In {\em ACSD}, pages 60--69. {IEEE} Computer Society, 2015.

\bibitem{DBLP:journals/tecs/BujtorV15}
Ferenc Bujtor and Walter Vogler.
\newblock Failure semantics for modal transition systems.
\newblock {\em {ACM} Trans. Embedded Comput. Syst.}, 14(4):67, 2015.

\bibitem{DBLP:conf/ictac/CaillaudR12}
Beno{\^{\i}}t Caillaud and Jean{-}Baptiste Raclet.
\newblock Ensuring reachability by design.
\newblock In Abhik Roychoudhury and Meenakshi D'Souza, editors, {\em ICTAC},
  volume 7521 of {\em {Lect. Notes Comput. Sci.}}, pages 213--227.
  {Springer-Verlag}, 2012.

\bibitem{DBLP:journals/sttt/DavidLLNTW15}
Alexandre David, Kim~G. Larsen, Axel Legay, Ulrik Nyman, Louis{-}Marie
  Traonouez, and Andrzej Wasowski.
\newblock Real-time specifications.
\newblock {\em {STTT}}, 17(1):17--45, 2015.

\bibitem{DBLP:journals/acta/FahrenbergL14}
Uli Fahrenberg and Axel Legay.
\newblock General quantitative specification theories with modal transition
  systems.
\newblock {\em Acta Inf.}, 51(5):261--295, 2014.

\bibitem{DBLP:journals/tcs/FahrenbergL14}
Uli Fahrenberg and Axel Legay.
\newblock The quantitative linear-time-branching-time spectrum.
\newblock {\em Theor. Comput. Sci.}, 538:54--69, 2014.

\bibitem{DBLP:conf/sofsem/FahrenbergL17}
Uli Fahrenberg and Axel Legay.
\newblock A linear-time-branching-time spectrum of behavioral specification
  theories.
\newblock In Bernhard Steffen, Christel Baier, Mark van~den Brand, Johann Eder,
  Mike Hinchey, and Tiziana Margaria, editors, {\em SOFSEM}, volume 10139 of
  {\em {Lect. Notes Comput. Sci.}}, pages 49--61. {Springer-Verlag}, 2017.

\bibitem{DBLP:conf/ictac/FahrenbergLT14}
Uli Fahrenberg, Axel Legay, and Louis{-}Marie Traonouez.
\newblock Structural refinement for the modal nu-calculus.
\newblock In Gabriel Ciobanu and Dominique M{\'{e}}ry, editors, {\em ICTAC},
  volume 8687 of {\em {Lect. Notes Comput. Sci.}}, pages 169--187.
  {Springer-Verlag}, 2014.

\bibitem{DBLP:journals/iandc/GrooteV92}
Jan~Friso Groote and Frits~W. Vaandrager.
\newblock Structured operational semantics and bisimulation as a congruence.
\newblock {\em Inf. Comput.}, 100(2):202--260, 1992.

\bibitem{DBLP:journals/jacm/HennessyM85}
Matthew Hennessy and Robin Milner.
\newblock Algebraic laws for nondeterminism and concurrency.
\newblock {\em J. ACM}, 32(1):137--161, 1985.

\bibitem{DBLP:journals/cacm/Hoare78}
C.~A.~R. Hoare.
\newblock Communicating sequential processes.
\newblock {\em Commun. {ACM}}, 21(8):666--677, 1978.

\bibitem{DBLP:journals/tcs/Larsen87}
Kim~G. Larsen.
\newblock A context dependent equivalence between processes.
\newblock {\em {Theor. Comput. Sci.}}, 49:184--215, 1987.

\bibitem{DBLP:conf/avmfss/Larsen89}
Kim~G. Larsen.
\newblock Modal specifications.
\newblock In Joseph Sifakis, editor, {\em Automatic Verification Methods for
  Finite State Systems}, volume 407 of {\em {Lect. Notes Comput. Sci.}}, pages
  232--246. {Springer-Verlag}, 1989.

\bibitem{DBLP:conf/concur/Larsen90}
Kim~G. Larsen.
\newblock Ideal specification formalism = expressivity + compositionality +
  decidability + testability + \dots.
\newblock In Jos C.~M. Baeten and Jan~Willem Klop, editors, {\em CONCUR},
  volume 458 of {\em {Lect. Notes Comput. Sci.}}, pages 33--56.
  {Springer-Verlag}, 1990.

\bibitem{DBLP:journals/tcs/Larsen90}
Kim~G. Larsen.
\newblock Proof systems for satisfiability in {H}ennessy-{M}ilner logic with
  recursion.
\newblock {\em {Theor. Comput. Sci.}}, 72(2{\&}3):265--288, 1990.

\bibitem{DBLP:conf/popl/LarsenS89}
Kim~G. Larsen and Arne Skou.
\newblock Bisimulation through probabilistic testing.
\newblock In {\em POPL}, pages 344--352. {ACM} Press, 1989.

\bibitem{DBLP:conf/lics/LarsenX90}
Kim~G. Larsen and Liu Xinxin.
\newblock Equation solving using modal transition systems.
\newblock In {\em LICS}, pages 108--117. {IEEE Computer Society}, 1990.

\bibitem{DBLP:journals/tcs/Milner83}
Robin Milner.
\newblock Calculi for synchrony and asynchrony.
\newblock {\em {Theor. Comput. Sci.}}, 25, 1983.

\bibitem{DBLP:conf/tcs/Park81}
David Michael~Ritchie Park.
\newblock Concurrency and automata on infinite sequences.
\newblock In Peter Deussen, editor, {\em TCS}, volume 104 of {\em {Lect. Notes
  Comput. Sci.}}, pages 167--183. {Springer-Verlag}, 1981.

\bibitem{DBLP:conf/icalp/Pnueli85}
Amir Pnueli.
\newblock Linear and branching structures in the semantics and logics of
  reactive systems.
\newblock In Wilfried Brauer, editor, {\em ICALP}, volume 194 of {\em {Lect.
  Notes Comput. Sci.}}, pages 15--32. {Springer-Verlag}, 1985.

\bibitem{DBLP:journals/entcs/Raclet08}
Jean-Baptiste Raclet.
\newblock Residual for component specifications.
\newblock {\em Electr. Notes Theor. Comput. Sci.}, 215:93--110, 2008.

\bibitem{DBLP:journals/fuin/RacletBBCLP11}
Jean{-}Baptiste Raclet, Eric Badouel, Albert Benveniste, Beno{\^{\i}}t
  Caillaud, Axel Legay, and Roberto Passerone.
\newblock A modal interface theory for component-based design.
\newblock {\em Fund. Inf.}, 108(1-2), 2011.

\bibitem{DBLP:conf/banff/Stirling95}
Colin Stirling.
\newblock Modal and temporal logics for processes.
\newblock In Faron Moller and Graham~M. Birtwistle, editors, {\em Banff Higher
  Order Workshop}, volume 1043 of {\em {Lect. Notes Comput. Sci.}}, pages
  149--237. {Springer-Verlag}, 1995.

\bibitem{inbook/hpa/Glabbeek01}
Rob~J. van Glabbeek.
\newblock The linear time -- branching time spectrum {I}.
\newblock In Jan~A. Bergstra, Alban Ponse, and Scott~A. Smolka, editors, {\em
  Handbook of Process Algebra}, {C}hapter~1, pages 3--99. Elsevier, 2001.

\bibitem{DBLP:journals/acta/Vogler89}
Walter Vogler.
\newblock Failures semantics and deadlocking of modular {P}etri nets.
\newblock {\em Acta Inf.}, 26(4):333--348, 1989.

\bibitem{DBLP:books/sp/Vogler92}
Walter Vogler.
\newblock {\em Modular Construction and Partial Order Semantics of {P}etri
  Nets}, volume 625 of {\em {Lect. Notes Comput. Sci.}}
\newblock {Springer-Verlag}, 1992.

\end{thebibliography}

\end{document}